\documentclass[11pt,letterpaper]{article}

\usepackage[margin=1in]{geometry}
\usepackage{amsmath,amssymb,amsthm,mathtools}
\usepackage{graphicx}
\usepackage{tabularx,booktabs}
\usepackage{microtype}
\usepackage{float}
\usepackage{enumitem}
\usepackage{xcolor}
\usepackage[colorlinks=true,linkcolor=blue!45!black,citecolor=blue!45!black,urlcolor=blue!45!black]{hyperref}
\usepackage[capitalise,nameinlink,noabbrev]{cleveref}
\setlist[enumerate]{leftmargin=2em,topsep=3pt,itemsep=3pt,parsep=1pt}

\newcommand{\N}{\mathbb{N}}
\newcommand{\Z}{\mathbb{Z}}
\newcommand{\bits}{\{0,1\}}
\newcommand{\EXP}{\mathsf{EXP}}
\newcommand{\NEXP}{\mathsf{NEXP}}
\newcommand{\Square}{\mathsf{Square}}
\newcommand{\Rect}{\mathsf{Rect}}
\newcommand{\Cfg}{\mathsf{Cfg}}
\newcommand{\plain}[1]{\mathsf{plain}(#1)}
\newcommand{\hd}[2]{\mathsf{hd}(#1,#2)}

\newcommand{\Junk}{\mathsf{Junk}}
\newcommand{\Part}{\mathsf{Part}}
\newcommand{\Car}{\mathsf{Car}}
\newcommand{\Cav}{\mathsf{Cav}}
\newcommand{\Sites}{\mathsf{Sites}}

\newcommand{\NewMid}{\mathsf{NewMid}}
\newcommand{\ModeOut}{\mathsf{ModeOut}}
\newcommand{\Emit}{\mathsf{Emit}}
\newcommand{\LegalScan}{\mathsf{LegalScan}}

\newcommand{\binenc}{\mathsf{bin}}
\newcommand{\val}{\mathsf{val}_2}
\newcommand{\supp}{\operatorname{supp}}

\newcommand{\Tstar}{\mathcal{T}^{\star}}
\newcommand{\Lplug}{L_{\mathrm{plug}}}

\theoremstyle{plain}
\newtheorem{thm}{Theorem}[section]
\newtheorem{lem}[thm]{Lemma}
\newtheorem{prop}[thm]{Proposition}

\theoremstyle{definition}

\theoremstyle{remark}

\title{Counting Square Tilings with a Fixed Set of Polyominoes Is \texorpdfstring{\(\#\mathrm{EXP}\)}{\#EXP}-Complete}
\author{%
  Qiuyu Chen$^{1,2}$ \qquad Baao Xie$^{2,*}$\\[6pt]
  {\normalsize $^{1}$Shanghai Jiao Tong University}\\[3pt]
  {\normalsize $^{2}$Ningbo Key Laboratory of Spatial Intelligence and Digital Derivative,}\\
  {\normalsize Ningbo Institute of Digital Twin, Eastern Institute of Technology, Zhejiang, China}
}
\date{}

\begin{document}
\maketitle
\begingroup
\renewcommand{\thefootnote}{*}%
\footnotetext{Corresponding author.}%
\endgroup
\begin{abstract}
An important question in counting complexity is whether exact counting can remain hard under restrictions that make feasibility easy. Tilings of squares by a fixed family of polyominoes provide a stringent test of this question, since tileability is polynomial-time decidable and the only varying input is a binary-encoded side length. We prove that there exists a fixed finite family of connected polyominoes for which counting translation-only square tilings is $\#\mathrm{EXP}$-complete under polynomial-time metric reductions, while the counting function belongs to $\#\mathrm{E}$. The resulting separation reveals that solution multiplicities can retain computational information that feasibility alone cannot capture. The proof develops a two-stage numerical isolation technique that combines a boundary-amplified Wang system with a rigid geometric compiler to recover accepting-computation counts while controlling all noncanonical tilings.
\end{abstract}

\section{Introduction}\label{sec:introduction}

Feasibility and exact counting can have sharply different computational complexity. This difference is particularly revealing for problems whose inputs leave little freedom to encode computation. Polyomino tilings provide a classical geometric setting for investigating this issue, since the choice of permitted pieces and the shape of the region offer two distinct sources of variation. Restricting these sources in turn brings into focus how much computational information can remain in the number of tilings even when feasibility is easy to determine.

Classical results illustrate how the representation of a tiling instance affects its complexity. Domino tilings admit efficient counting algorithms based on Pfaffian methods \cite{Kasteleyn61}, while more general tiling problems exhibit substantial computational hardness \cite{MooreRobson01}. Pak and Yang \cite{PakYangRect13} proved that counting tilings of simply connected regions by a fixed finite family of rectangles is $\#\mathrm{P}$-complete. Thus, fixing the pieces still leaves the boundary of the input region available to encode computation. A stronger restriction is to require every region to be an ordinary square, leaving its side length as the only varying parameter.

For a fixed finite family $T$ of polyominoes, let $g_T(n)$ be the number of tilings of the $n\times n$ square using translations of pieces in $T$. With $n$ given in binary, $O(\log n)$ bits can describe a square containing $n^2$ cells. The central question is whether this restricted input can still yield a $\#\mathrm{EXP}$-complete counting function. This fixed-family square-counting problem was formulated by Pak \cite[Conjecture~1.7]{Pak18}. Moreover, the tileability of an $m\times n$ rectangle can be decided in $O(\log m+\log n)$ time for every fixed family of tiles \cite[Theorem~1.9]{Pak18}. The setting therefore offers a stringent test of whether solution counts can retain computational information beyond what feasibility reveals.

Our main theorem answers this question affirmatively. Here $\#\mathrm{EXP}$ counts accepting paths of nondeterministic machines running in time $2^{\operatorname{poly}(\ell)}$ on inputs of length $\ell$, and $\#\mathrm{E}$ denotes the subclass with time bound $2^{O(\ell)}$. Hardness is understood under polynomial-time metric reductions \cite{Kren88}, which recover the source count by polynomial-time postprocessing of a target count given in binary.

\begin{thm}[Main theorem]\label{thm:main}
There exists a fixed finite family $\Tstar$ of nonempty connected polyominoes such that counting translation-only tilings of the $n\times n$ square, with $n$ given in binary, is $\#\mathrm{EXP}$-complete under polynomial-time metric reductions. The counting function belongs to $\#\mathrm{E}$.
\end{thm}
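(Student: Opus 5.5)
The proof has two halves: membership and hardness.

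\emph{Membership.} For membership in $\#\mathrm{E}$, note that the input has length $\ell=\Theta(\log n)$. A tiling of the $n\times n$ square by translates of pieces of $\Tstar$ is described by at most $n^2$ placements. A nondeterministic machine scans the cells in row-major order. At each cell it guesses which piece, and which anchor of that piece, covers the first uncovered cell, and it checks for overlaps against a window of $O(n)$ rows, which is enough because the pieces have bounded size. This takes time $\mathrm{poly}(n)=2^{O(\ell)}$. The canonical scan order makes accepting paths correspond bijectively to tilings, so $g_{\Tstar}\in\#\mathrm{E}$.

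\emph{Hardness: reduction from Turing machines.} I would start from a fixed $\#\mathrm{EXP}$-hard function. Padding plus a universal machine gives a fixed nondeterministic machine $M$ running in time $2^{O(\ell)}$ on inputs $x$ of length $\ell$, whose accepting-path count $\#M(x)$ is $\#\mathrm{EXP}$-complete under metric reductions. From $x$ I compute, in polynomial time, a side length $n=n(x)$ whose binary expansion encodes $x$, for example $n=2^{k}\cdot(\text{header}\,x\,\text{trailer})$ with $k=c\ell$. The pieces must read $x$ off the geometry. Following the paper's outline, I would route this through a boundary-amplified Wang system: a fixed Wang tile set whose tilings of the $n\times n$ square are forced to first run a counter that writes $\binenc(n)$, and hence $x$, along a bottom band. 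They then simulate $M$ on $x$ in a space-time diagram, with each nondeterministic choice encoded by a tile choice. Finally they fill the leftover area in a rigid way. The number of Wang tilings should then equal $A\cdot\#M(x)+B$, where $A$ and $B$ are computable in polynomial time, or more generally should be a polynomial-time decodable quantity in which $\#M(x)$ sits in a known block of bits. The amplification step multiplies the canonical contribution by a huge known factor, such as $2^{N}$ with $N$ larger than the logarithm of the number of junk configurations, so that junk lands in the low-order bits.

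\emph{Hardness: compiling Wang tiles into polyominoes.} Next comes the geometric compiler. Each Wang tile becomes a large macro-square polyomino with bumps and dents encoding the edge colors, so that translation-only tilings of an $n\times n$ square split into two classes. The canonical tilings are in bijection with Wang tilings of an $m\times m$ square, where $n=sm+r$. The noncanonical tilings are those containing misaligned or filler regions. I would also add small gadget pieces for boundary strips of width $r$.

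\emph{Main obstacle.} Controlling the noncanonical tilings is the step I expect to be hardest. Fixed polyomino families admit many degenerate tilings: shifts, strip fillers, and mixtures near the boundary. I would try to prove a rigidity lemma: any tiling either is canonical or contains a defect that propagates along an entire row or column. That would let the noncanonical count be written as an explicit, polynomial-time computable linear recurrence in $n$, or bounded by $2^{O(n)}$, while the canonical count is amplified to scale $2^{\Theta(n^2)}\cdot\#M(x)$. The metric reduction then queries $g_{\Tstar}(n)$, subtracts or truncates the controlled junk term, and divides by the amplification factor to recover $\#M(x)$. Verifying that this second stage of numerical isolation, geometric junk versus Wang junk, leaves a clean readout is the crux of the argument.
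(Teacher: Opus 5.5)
Your membership argument and your outline of the Wang stage agree with the paper. The step you flag as the crux, however, fails as proposed. A rigidity lemma of the form ``either canonical, or a defect runs along a whole row or column'' cannot make the noncanonical count $2^{O(n)}$, or an explicit polynomial-time recurrence. A tiling that is canonical except along one such defect line still has canonical regions on both sides of it. Each of those regions carries its own free amplification choices and its own open Wang tiling. In the paper's construction, a single straight seam already contributes at least $s_{\mathrm{amp}}^{E_n-n}$ tilings, since every block admits at least the dummy Wang tiling. So the noncanonical part is of order $2^{\Theta(n^2)}$, and it involves open Wang counts of sub-rectangles, which are themselves hard. Nothing explicit can be subtracted. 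Instead the junk must be \emph{dominated} by the canonical amplification factor, and your geometric stage never says where that $2^{\Theta(n^2)}$ factor comes from. (The filler gadgets for $n=sm+r$ are also unnecessary, since the reduction need only query multiples of the pitch; they would just add more uncontrolled tilings.)

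The paper closes this with four ingredients:
\begin{itemize}
\item There are $s_{\mathrm{amp}}$ free labels on every open macro-edge, so the one-block term is $s_{\mathrm{amp}}^{E_n}a_{\mathrm{rect}}(n,n)$.
\item Rigid anchoring of carriers on the macrogrid, together with saturation of the opening graph, gives the exact identity $g_{\Tstar}(P_sn)=\sum_{D\in\Part_n}s_{\mathrm{amp}}^{E_n-I(D)}\prod_B a_{\mathrm{rect}}(w_B,h_B)$.
\item Every nontrivial partition has $I(D)\ge n$, and at most $256^I$ partitions have $I$ seams.
\item Open Wang tilings of arbitrary rectangles satisfy the perimeter bound $a_{\mathrm{rect}}(w,h)\le R_{\mathrm{rect}}^{w+h}$.
\end{itemize}
Since $\sum_B(w_B+h_B)=2n+I\le 3I$, these give normalized junk at most $\sum_{I\ge n}(256R_{\mathrm{rect}}^3/s_{\mathrm{amp}})^I<1$.

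The perimeter bound is also what your Wang stage is missing. Boundary amplification separates classes only at perimeter scale, $Q^{4n-4}$ against $Q^{3n-2}$. So ``$2^N$ with $N$ larger than the logarithm of the junk'' works only if every non-frame mode has at most $R^n$ computational-track tilings. That property has to be designed in; the paper's Case~Q forces headless windows to keep their centres, which makes successor columns rigid. A tableau whose open columns could update freely would produce area-exponential junk that swamps both readouts.
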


The $\#\mathrm{E}$ upper bound follows by canonically encoding a tiling and checking it in time exponential in the bit length of $n$. The central challenge is hardness. The count for an unmodified square includes every tiling admitted by its fixed pieces, not just configurations representing an encoded computation. Recovering the desired accepting-path count therefore requires control over the aggregate contribution of all these configurations. The proof achieves this by numerical isolation, arranging for the intended computation histories to contribute a common multiplicative factor while the entire remaining contribution is smaller than that factor. Integer division then recovers the target count exactly. This principle is implemented at two successive levels.

The first level uses a fixed Wang-tile system to encode accepting computations. Local edge-matching rules have long been used to simulate computation \cite{Wang61,Berger66,Robinson71}, and fixed translationally invariant systems with binary-encoded dimensions provide related examples of computational hardness \cite{GottesmanIrani13,AharonovIrani23}. A fixed-Wang-square counting result was also reported by Mennen and Pak \cite{PakTalk20}. In the present construction, a scanner reconstructs a machine's initial configuration from the binary side length, and a Wang tableau records its accepting branches. A boundary amplifier gives the intended full-frame histories a dominant multiplicity, while a bound on all other boundary modes allows their contribution to be separated arithmetically. Crucially, the same construction bounds the number of open Wang tilings of any rectangle exponentially in its perimeter, providing the control needed by the geometric stage.

The second level realizes the Wang system using a fixed family of connected polyominoes. Geometric encodings of local matching rules provide a starting point \cite{DemaineDemaine07,Ollinger09}; here the central issue is preserving exact counts over all geometric tilings. Each Wang tile is represented by a carrier whose interior cavity records its identity and whose interfaces encode matching data. Local rigidity aligns the carriers on a common macrogrid, and the open-interface structure partitions the grid into rectangles. This gives an exact decomposition of the polyomino tiling count into contributions from rectangular dissections. The single-block configuration carries the Wang count, whereas every nontrivial dissection loses interface-label multiplicity along its seams. Bounds on seam length, the number of dissections, and the Wang counts within their blocks together make the total noncanonical contribution strictly smaller than one unit after normalization. A second integer division therefore extracts the Wang count.

Composing the two numerical readouts gives the required polynomial-time metric reduction. More broadly, numerical isolation combines a dominant term encoding accepting computations with global combinatorial bounds on the remaining configurations. Together with rigid geometric compilation, this provides a route to exact counting hardness in other succinctly specified systems with fixed local rules.

\section{Preliminaries}\label{sec:preliminaries}

The tiling conventions, counting complexity classes, metric reductions, and machine notation are specified below.

All regions are subsets of \(\Z^2\); a finite set is 4-connected if its grid-adjacency graph is connected. Let \(\binenc(n)\) denote the canonical binary representation of \(n\), with \(\binenc(0)=0\). For \(n\geq0\), write \(\Square_n=\{0,\ldots,n-1\}^2\), with \(\Square_0=\varnothing\). Rectangles of positive width \(w\) and height \(h\) are denoted by \(\Rect_{w,h}=\{0,\ldots,w-1\}\times\{0,\ldots,h-1\}\).

If \(T=\{\tau_1,\ldots,\tau_k\}\), a \(T\)-tiling of a finite region \(\Omega\) is a finite set of placements
\((t,v)\in\{1,\ldots,k\}\times\Z^2\) whose translates \(\tau_t+v\) are pairwise disjoint and have union
\(\Omega\).  Only translations are allowed.  The empty region has one empty
tiling.

A Wang system is a finite set of labelled tile types, each carrying north, east, south, and west colours.  Distinct tile types are allowed to carry the same four colours.  An open Wang tiling of a finite rectangle assigns a tile type to each cell, matches colours on internal edges, and imposes no condition on colours exposed along the geometric boundary.  This type-level convention is important in Section~\ref{sec:geometry}, where the geometric compiler preserves tile identity even when four edge colours coincide.

The class \(\#\EXP\) is used in its standard complexity-theoretic sense: it consists of total functions \(f\colon\bits^*\to\N\) equal to the number of accepting paths of a nondeterministic Turing machine that halts within
\(2^{p(|x|)}\) steps for some polynomial \(p\).  Likewise, \(\NEXP\) denotes nondeterministic time \(2^{\operatorname{poly}(|x|)}\).  The linear-exponential counting subclass is denoted by \(\#\mathrm{E}:=\#\mathrm{NTIME}(2^{O(|x|)})\).  All counting functions in this paper are total: malformed or noncanonical binary strings receive value zero.

\begin{prop}[Fixed-family square counts lie in \(\#\mathrm{E}\)]\label{prop:fixed-family-sharpE}
For every fixed finite family \(T\) of finite polyominoes, the function \(G_T\) belongs to \(\#\mathrm{E}\).
\end{prop}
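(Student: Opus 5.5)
We construct a nondeterministic machine \(M_T\) whose number of accepting paths on input \(x\) is exactly \(G_T(x)\). Here \(G_T(x)=g_T(n)\) when \(x=\binenc(n)\) is canonical and \(G_T(x)=0\) otherwise. The key requirement is that \(M_T\) run in time \(2^{O(|x|)}\). Since \(n<2^{|x|}\), the square \(\Square_n\) has at most \(n^2<4^{|x|}\) cells, which is already \(2^{O(|x|)}\). The whole argument therefore reduces to making accepting paths correspond bijectively to tilings, using a canonical guessing order.

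On input \(x\), \(M_T\) first checks deterministically that \(x\) is a canonical binary string and rejects otherwise, so malformed strings get count zero. It then decodes \(n\) and allocates an \(n\times n\) occupancy array of \(2^{O(|x|)}\) bits. Next it scans cells in a fixed order: row-major order, each row read from left to right, starting with the row of smallest \(y\)-coordinate. At each step it finds the first unoccupied cell \(c\) in this order.

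For each piece \(\tau_t\in T\), fix its distinguished cell \(a_t\): the first cell of \(\tau_t\) in the same order. The machine nondeterministically chooses an index \(t\in\{1,\ldots,k\}\). It then places \(\tau_t+(c-a_t)\), rejecting if this translate leaves \(\Square_n\) or overlaps an occupied cell, and otherwise marks its cells. When no unoccupied cell remains, it accepts.

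Each step costs \(O(1)\) time on top of the search for the next free cell, because \(T\) is fixed. The number of steps is at most \(n^2\). The search for free cells costs \(O(n^2)\) in total, or at worst \(O(n^4)\) with naive rescanning on a multitape machine. Either way the total time is \(2^{O(|x|)}\).

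The main point to verify is that accepting paths correspond bijectively to \(T\)-tilings of \(\Square_n\), so that no tiling is counted twice.

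\textbf{Every tiling is produced exactly once.} Take any tiling and consider the first uncovered cell \(c\) at any stage. The tile covering \(c\) in the tiling has all of its other cells later in the order, because every earlier cell is already covered. Hence \(c\) is that tile's own first cell, so its placement is \(\tau_t+(c-a_t)\) for its type \(t\). Thus the tiling determines the choice of \(t\) at every step, and it yields exactly one accepting path.

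\textbf{Every accepting path is a tiling.} The placements made on an accepting path are pairwise disjoint, lie inside the square, and cover it. So the path outputs a valid tiling, and distinct choice sequences give distinct placement sets.

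One subtlety remains. Placements are defined as pairs \((t,v)\), so if two indices carried identical shapes they would be counted separately. The machine handles this correctly, since it branches on the index \(t\) itself, exactly matching the definition of a placement.

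A second subtlety is that a nondeterministic branch of non-binary arity can be simulated by a binary tree with \(k\) leaves. The rejecting dummy leaves this introduces do not change the number of accepting paths.

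The case \(n=0\) gives one accepting path, matching the single empty tiling.

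Hence \(G_T\in\#\mathrm{NTIME}(2^{O(|x|)})=\#\mathrm{E}\).
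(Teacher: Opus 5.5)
Your proof is correct. It reaches the paper's conclusion by a somewhat different mechanism.

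The paper uses a generic guess-and-verify certificate. The witness is the complete list of placements, each recorded by tile type and anchor coordinates, written in a canonical lexicographic order, of $O(n^2\ell)$ bits. A deterministic verifier checks containment, pairwise disjointness and exact coverage. Uniqueness of the accepting witness comes from the canonical ordering; this tacitly requires the verifier to reject unsorted lists, and requires the variable-length witness to be guessed without creating duplicate paths.

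Your first-free-cell procedure builds canonicity into the machine instead. Each nondeterministic choice is only a tile index at a forced position, anchors are recomputed rather than guessed, and there is no ordering or witness format to check. The price is the structural lemma that the first uncovered cell is the first cell of the tile covering it. You prove this correctly, and it uses two facts: each $\tau_t$ is nonempty, and row-major order is translation-invariant, so the first cell of $\tau_t+v$ is $a_t+v$. In fact your choice sequence is exactly the list of tile types obtained by sorting a tiling's placements by their first cells, so your certificate is a compressed canonical witness of the paper's kind.

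One minor imprecision: a placement costs $O(1)$ only in a RAM model. On a Turing tape, moving between rows of the array costs $O(n)$ head moves. Any $\operatorname{poly}(n)$ overhead is still $2^{O(|x|)}$, so your time bound stands.
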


\begin{proof}
Let \(\ell=|\binenc(n)|\).  A tiling of \(\Square_n\) has at most \(n^2\) placements.  Encode the placements in a canonical lexicographic order, recording for each one its fixed tile type and its anchor coordinates.  Because \(T\) is fixed, the resulting witness has \(O(n^2\ell)\) bits, which is \(2^{O(\ell)}\).  A deterministic verifier checks containment, pairwise consistency, and exact coverage in time polynomial in this witness length.  The canonical ordering gives one accepting witness per tiling.  Since \(n<2^\ell\), both witness generation and verification take \(2^{O(\ell)}\) time.
\end{proof}

A polynomial-time metric reduction from \(f\colon\bits^*\to\N\) to \(g\colon\bits^*\to\N\) is a pair of polynomial-time functions \(r\) and \(h\) satisfying \(f(x)=h\bigl(x,g(r(x))\bigr)\) for every \(x\), where the integer argument of \(h\) is given in binary.  Thus the running time of \(h(x,y)\) is measured in \(|x|+|\binenc(y)|\).  This is Krentel's metric-reduction model \cite{Kren88}. For a counting function \(f\), write \(\supp(f)=\{x:f(x)>0\}\) for its support language.

For a one-tape machine with tape alphabet \(\Gamma\), state set \(Q\), and sentinel symbol \(\mathsf{\#_s}\), a cell is either \(\plain{a}\) or \(\hd{q}{a}\).  A \emph{legal extended configuration of height \(H\geq3\)} is a word on indices \(-1,0,\ldots,H\) with sentinel cells \(\plain{\mathsf{\#_s}}\) at \(-1,0,H-1,H\), exactly one head cell at an index in \(\{1,\ldots,H-2\}\), and plain cells elsewhere.  A labelled arrow is a tuple \(\alpha=(q,a,q',b,D)\), with \(D\in\{L,S,R\}\); an arrow is inapplicable if its requested move would enter a sentinel.  Distinct labelled arrows count as distinct computation branches even when they induce the same tape update.

\subsection{Proof strategy}\label{subsec:counting-argument}

The hardness proof proceeds through two numerical readouts. The first uses a fixed nondeterministic machine whose accepting-path count $C_U(x)$ is $\#\mathrm{EXP}$-complete. A polynomial-time map encodes $x$ as a positive binary side length $n=E_{\mathrm{side}}(x)$. The Wang construction of Sections~\ref{sec:tableau}--\ref{sec:wang} has a total open-square tiling count $a_{\mathrm{prod}}(n)$. For all sufficiently large encoded side lengths, its full boundary class contributes a fixed multiple of $1+C_U(x)$, while the other boundary classes contribute a remainder $J_W(n)$ satisfying
\begin{equation}\label{eq:roadmap-wang}
 a_{\mathrm{prod}}(n)=Q^{4n-4}(1+C_U(x))+J_W(n),
 \qquad 0\le J_W(n)<Q^{4n-4}.
\end{equation}
The full boundary frame has $4n-4$ independently labelled sites. Every proper boundary mode has fewer such sites, and the number of compatible tableaux can be controlled by a perimeter-scale bound. Division by $Q^{4n-4}$ therefore retrieves the accepting-path count, with the finitely many exceptional inputs handled by a fixed table.

The second readout is geometric. Section~\ref{sec:geometry} represents the fixed Wang system by connected polyominoes with a constant macrocell pitch $P_s$ and $s_{\mathrm{amp}}$ possible labels on each open interface. The total number $E_n=2n(n-1)$ of macrogrid adjacencies gives the canonical one-block contribution $s_{\mathrm{amp}}^{E_n}a_{\mathrm{prod}}(n)$. All other geometric tilings are classified by rectangular partitions, whose seams reduce the number of available labels. Denoting their aggregate contribution by $J_P(n)$, the geometric counting identity yields
\begin{equation}\label{eq:roadmap-geo}
 g_{\Tstar}(P_sn)=s_{\mathrm{amp}}^{E_n}a_{\mathrm{prod}}(n)+J_P(n),
 \qquad 0\le J_P(n)<s_{\mathrm{amp}}^{E_n}.
\end{equation}
This second inequality follows from a lower bound on seam length, an exponential bound on the number of rectangular partitions, and the perimeter bound for open Wang rectangles. The exact geometric interfaces are defined in Section~\ref{sec:geometry}, where Figures~\ref{fig:carrier-interface} and~\ref{fig:dissection-suppression} accompany their respective counting arguments.

Together, these identities describe how the accepting-path count is encoded first in the open Wang-square count and then in the translation-only polyomino-square count. The first numerical readout separates the intended computation histories from other Wang boundary modes, while the second separates the canonical one-block contribution from all other polyomino tilings. Applied in reverse order, the two floor divisions recover the original accepting-path count. Sections~\ref{sec:first-readout} and~\ref{sec:completion} establish the bit-complexity bounds required for the resulting metric reductions.

\section{Encoding an Exponential-Time Counting Computation}\label{sec:source}

The hardness reduction begins with a fixed counting machine whose initial configuration can be reconstructed from the square's side length. Parsimonious simulation and polynomial padding provide a universal accepting-path counter. A self-delimiting encoding then places its input in the binary side length; the final bounds ensure that the encoded computation can be recovered and completed within the available tableau.

\begin{thm}[Computation encoded by the side length]\label{thm:source-machine}
There are a fixed one-head nondeterministic Turing machine $M_\star$, a counting function $C_U\in\#\mathrm E$ that is $\#\mathrm{EXP}$-complete under parsimonious reductions, and a polynomial-time injective map $E_{\mathrm{side}}:\bits^*\to\N$ with the following property. Let $\operatorname{Acc}_\star(n)$ be the number of accepting length-$n$ branches of $M_\star$ from the initial configuration determined by the binary digits of $n$. Then, for every $n\ge 5$,
\[
\operatorname{Acc}_\star(n)=
\begin{cases}
1+C_U(x),& n=E_{\mathrm{side}}(x),\\
1,&n\notin E_{\mathrm{side}}(\bits^*).
\end{cases}
\]
On encoded side lengths $n=E_{\mathrm{side}}(x)$, the parser, input preparation, and simulation reach an accepting or rejecting stay-sink within $n-1$ steps. Both stay-sinks have unique continuations.
\end{thm}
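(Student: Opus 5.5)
We build $C_U$, $E_{\mathrm{side}}$ and $M_\star$ in that order, and then check that the step budget fits. The plan has three stages: a $\#\mathrm{E}$-complete-for-$\#\mathrm{EXP}$ source, a self-delimiting side-length code with room to spare, and a machine that parses, simulates and then idles.

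\emph{Stage 1: the source function $C_U$.} Fix a universal counting function
\[
C_{\mathrm{univ}}(\langle N,y,1^t\rangle)=\#\{\text{accepting paths of } N \text{ on } y \text{ within } 2^{t}\text{ steps}\}.
\]
Every $f\in\#\mathrm{EXP}$ with witness machine $N_f$ and bound $2^{p(|x|)}$ reduces parsimoniously to it via $x\mapsto\langle N_f,x,1^{p(|x|)}\rangle$. The padding $1^t$ makes the instance length at least $t$, so $C_{\mathrm{univ}}$ is computable by a clocked universal simulator in nondeterministic time $2^{O(\ell)}$; hence $C_{\mathrm{univ}}\in\#\mathrm{E}$. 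To keep branch counts exact, the simulator must be normalized. We clock it so that every branch has exactly the prescribed length. We also make each simulated nondeterministic choice correspond to exactly one labelled arrow, which is possible because the simulator is fixed and its distinct arrows count separately. Ill-formed inputs are sent to a rejecting sink, so they contribute $0$. Set $C_U:=C_{\mathrm{univ}}$.

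\emph{Stage 2: the encoding $E_{\mathrm{side}}$.} Put
\[
E_{\mathrm{side}}(x)=\val(1\,\mathrm{code}(x)\,0^{k}),
\]
where $\mathrm{code}$ doubles the bits of $x$ and ends with the delimiter $01$. The padding length is $k=c\,|x|^{d}$ for constants $c,d$ chosen so that the total running time of the simulation below, which is $2^{O(|x|)}$ times polynomial overhead, is at most $n-1\ge 2^{k}$. This map is injective and computable in polynomial time. Recognizing whether a digit string has this exact form, including the exact padding length, is decidable by the machine itself in time $\mathrm{poly}(\log n)$.

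\emph{Stage 3: the machine $M_\star$.} $M_\star$ starts from the configuration whose tape holds the binary digits of $n$. It runs three phases.
\begin{enumerate}
\item It deterministically parses the digits. On failure it enters the accepting stay-sink, so exactly one branch accepts and $\operatorname{Acc}_\star(n)=1$ for $n\notin E_{\mathrm{side}}(\bits^*)$.
\item On success it first makes one nondeterministic binary choice. One branch goes straight to the accepting sink; this branch supplies the $1$ in $1+C_U(x)$. The other branch deterministically prepares the input and runs the normalized simulator of $C_U$ on $x$. It enters the accepting sink if the simulated path accepts and the rejecting sink otherwise.
\item The sinks are stay-moves with a single applicable arrow each, which gives the unique continuation. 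Every branch is extended by the sink to length exactly $n$.
\end{enumerate}
Branches are thus in bijection with one default path plus the accepting paths of the simulation, which gives the displayed formula.

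The steps must respect the sentinel constraints. The tape region is polylogarithmic in $n$, which is well below the available height as soon as $n\ge 5$. Small cases can be absorbed into the constants, or the parser can treat them as non-encoded inputs.

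The main obstacle is exact branch accounting. The same tape update can arise from distinct labelled arrows, so any deterministic phase must use a unique applicable arrow in each configuration. Likewise, the simulator's nondeterminism must biject with the paths of the source machine. I would handle this first, by defining the normalized simulator so that each step of every phase is arrow-unique. The time bound $n-1$ then comes from the padding exponent.
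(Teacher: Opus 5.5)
Your architecture matches the paper's: a padded universal counter $C_U\in\#\mathrm{E}$, a self-delimiting padded binary code, and one default accepting path plus the simulation. The gap is in \emph{where} you split off the default path. In your $M_\star$, the off-code value $\operatorname{Acc}_\star(n)=1$ holds only if the deterministic parser itself reaches the accepting sink within $n$ steps, and never requests a move into a sentinel, for \emph{every} off-code $n\ge5$. Nothing in your argument gives this for small $n$, because a $\mathrm{poly}(\log n)$ parse need not fit into $n$ steps there. At $n=5$ the leading one sits directly below the top sentinel. A parser that confirms the leading bit by inspecting the next cell then requests an inapplicable move and kills the only branch. The theorem fixes exact values for all $n\ge5$, so these cases cannot be ``absorbed into the constants''.

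The paper avoids this by giving the start pair exactly two arrows. A null arrow goes straight into the accepting stay-sink, and a real arrow goes into the parser, which sends off-code inputs to the \emph{rejecting} sink. The $+1$ is then unconditional, and a slow or crashing parse off code does no harm. You should either adopt that ordering or prove a fast, sentinel-safe rejection for all small $n$.

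Your sentinel argument on the coded side is also wrong as stated. The simulation region is not polylogarithmic in $n$: it can use $2^{\Theta(|x|)}$ cells, whereas $\log n$ is only polynomial in $|x|$. More to the point, the digits of $n$ start at index $1$, next to the bottom sentinel. A leftward excursion of the simulated machine would therefore kill accepting branches and break $1+C_U(x)$. You need an explicit preparation step that copies $x$ with margins longer than the simulation time on both sides, together with a check that this fits below index $n-2$. The paper does this with $\mathsf{markL}$, $2^{p_U(m)}+1$ blanks on each side of $x$, and $\mathsf{markR}$.

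Two smaller repairs are also needed.
\begin{enumerate}
\item With $k=c|x|^d$ you get $E_{\mathrm{side}}(\varepsilon)=\val(101)=5$, which violates the $n-1$ step bound. The padding needs an additive constant; the paper uses $72m+72$.
\item A fixed simulator cannot give each choice of an arbitrary source machine its own labelled arrow. Parsimony instead comes from replacing $d$-way choices by binary trees whose unused leaves reject, as in the paper's $k_{\mathrm{tree}}$ conversion.
\end{enumerate}
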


\begin{proof}
We first normalize an arbitrary \(\#\EXP\) counting machine without changing its path count.  Let the source machine be fixed and suppose that, on inputs of length \(s\), every branch halts within \(T(s)=2^{p(s)}\) steps for a polynomial \(p\).  Enlarging \(p\) if necessary, assume \(p(s)\geq s+1\).  A finite transition list is encoded self-delimitingly by unary indices for states and letters and fixed unary codes for the three head movements.  The decoding procedure is deterministic and polynomial in the code length.

For later use, suppose a source pair has at most \(d_{\max}\) outgoing arrows.  Its choice can be replaced by a complete binary tree of depth
\[
 k_{\mathrm{tree}}=
 \begin{cases}
 1,&d_{\max}\leq1,\\
 \min\{k:2^k\geq d_{\max}\},&d_{\max}\geq2.
 \end{cases}
\]
Leaves representing unused choices enter a rejecting sink.  Accepting and rejecting sinks each have one stay-arrow on every tape letter.  A selected original arrow determines one tree path, so the conversion is bijective on labelled computation paths.

For a machine with \(t\) tapes, store all tapes in one contiguous block, recording a letter and a head bit for each simulated tape.  After \(u\) steps the block length is at most \(t(s+u+1)\).  A scan, rewrite, and head-bit update costs at most
\(2t(s+u+1)+16\).  Summation for \(0\leq u<T(s)\) gives
\[
 T(s)\bigl(2t(s+T(s)+1)+16\bigr)
 \leq 2^{2p(s)+c_t}
\]
for a constant \(c_t\) depending only on the fixed source machine, after enlarging it to cover finitely many small inputs.  Applying the binary-tree replacement above to each nondeterministic choice has constant depth, so the fully converted one-tape binary machine halts within \(2^{q(s)}\) steps for some polynomial \(q\) depending only on the source machine.  All work inserted between nondeterministic choices is deterministic, so the path count is unchanged.

The fixed machine \(U_{\log}\) parses strings of the form
\[
 1^{k}0\,\mathsf{code}_M\,1^{p}0z,
 \qquad |\mathsf{code}_M|=k,
\]
where the unary block \(1^p\) is padding.
It converts the decoded transition list once and simulates the resulting binary machine on \(z\) for exactly \(2^m\) simulated steps, where \(m\) is the total input length.  Missing transitions have already been redirected to the rejecting sink, and after the simulated-step budget any branch that has not reached a stay-sink is sent deterministically to the rejecting stay-sink.  A simulated tape block has length at most \(2^m+2m+4\), so one simulated step costs at most \(2^{2m+2}\) for \(m\geq1\).  Including parsing, the running time is at most \(m^2+4m+32+2^{3m+2}+2^m\leq 2^{8(m+1)}\). The inequality is checked directly for \(m=0,1\); for \(m\geq2\),
\(m^2+4m+32\leq2^{3m}\).  Set \(p_U(m)=8(m+1)\). Thus \(U_{\log}\) reaches a stay-sink within \(2^{p_U(m)}\) steps.

Let \(C_U(x)\) count the paths in the accepting sink at that time.  A binary clock of \(p_U(m)+1\) bits makes a genuinely halting simulator.  Its tape length is at most \(m+2^{p_U(m)}+4\); hence its running time is at most
\[
 2^{p_U(m)}
 \bigl(m+2^{p_U(m)}+p_U(m)+8\bigr)^2
 \leq2^{3p_U(m)+2}
 \leq2^{26(m+1)}.
\]
For hardness, let \(f\in\#\EXP\) and apply the preceding conversions to a fixed machine computing \(f\).  Let \(\mathsf{code}_f\) be its transition code before the binary-tree conversion, and let \(q_f(s)\) be a polynomial such that the converted machine halts within \(2^{q_f(s)}\) steps on every input of length \(s\).  Choose a fixed constant \(c_f\) large enough to cover the delimiters and code, and define the parsimonious source map
\[
 r_f(z)=
 1^{|\mathsf{code}_f|}0\,\mathsf{code}_f\,
 1^{q_f(|z|)+c_f}0z.
\]
Its output length \(m\) is polynomial in \(|z|\) and satisfies \(m\geq q_f(|z|)\).  Hence the \(2^m\)-step universal simulation outlasts every converted computation.  Once the simulated machine reaches an accepting or rejecting stay-sink, the continuation is unique, so accepting paths are preserved bijectively.  Thus
\(
 C_U(r_f(z))=f(z)
\)
for every \(z\), and \(C_U\) is \(\#\EXP\)-complete under parsimonious polynomial-time reductions.  Notice that \(C_U\) itself is computable by a nondeterministic machine in linear-exponential time; the polynomial padding in \(r_f\) is what allows hardness for the full standard \(\#\EXP\) class.

Moreover, if \(L\in\NEXP\), the function counting accepting paths of a nondeterministic \(2^{\operatorname{poly}(|z|)}\)-time machine deciding \(L\) belongs to \(\#\EXP\) and has support exactly \(L\).  Parsimonious completeness therefore implies that the support language \(\supp(C_U)\) is \(\NEXP\)-hard.

For \(m=|x|\), define the self-delimiting word and its numerical value by
\[
 \beta(x)=1x1\,0^m1\,0^{72m+72},
 \qquad E_{\mathrm{side}}(x)=\val(\beta(x)).
\]
Its length is \(L_{\mathrm{enc}}(x)=74m+75=8p_U(m)+10m+11\). Reading from the least significant end, count the terminal zero run, cross its following separator \(1\), and count the next zero run of length \(m\). The next separator \(1\) is then checked and skipped before reading the preceding \(m\) bits as \(x\); the remaining leading bit must be \(1\), with no further bits.  Checking that the first run has length \(72m+72\) recovers \(x\) uniquely.  The procedure scans only polynomially many bits and proves injectivity and polynomial-time decodability.

Since \(E_{\mathrm{side}}(x)\geq2^{L_{\mathrm{enc}}(x)-1}\),
\[
 \frac{L_{\mathrm{enc}}(x)-1}{4}
 =2p_U(m)+\frac{5m}{2}+\frac52
 \geq2p_U(m)+2.
\]
Writing \(t=2^{p_U(m)}\), we have \(4t^2\geq t+1\), and therefore
\[
 2^{p_U(m)}+1
 \leq2^{(L_{\mathrm{enc}}(x)-1)/4}
 \leq E_{\mathrm{side}}(x)^{1/4}.
\]
Also \(L_{\mathrm{enc}}(x)\geq75\), giving
\(E_{\mathrm{side}}(x)\geq2^{74}\).  Binary branching further gives
\(1+C_U(x)\leq2^{E_{\mathrm{side}}(x)^{1/4}}\).

We next define \(M_\star\).  Its alphabet contains that of \(U_{\log}\), the sentinel \(\mathsf{\#_s}\), delimiters \(\mathsf{markL},\mathsf{markR}\), and finitely many parser letters.  At its start pair it has exactly two arrows.  The null arrow enters the accepting stay-sink.  The real arrow enters a deterministic parser.

Write \(\Cfg^\star_n\) for the initial configuration specified by the binary digits of \(n\). On \(\Cfg^\star_n\), the sentinels occupy \(-1,0,n-1,n\).  The head is at index \(1\), on the least significant bit.  For \(1\leq y\leq n-3\), index \(y+1\) stores \(\mathsf{one_b}\) if the \(y\)-th bit of \(n\) is one, \(\mathsf{zero_b}\) if it is zero below the leading bit, and \(\mathsf{work_b}\) above the leading bit.  The inequalities \(n<2^{n-2}\) (for \(n\geq5\)) and \(\lfloor\log_2 n\rfloor\leq n-3\) show that the two omitted high bits are zero and that the leading bit remains inside the sentinel interval.  At \(n=4\), the omitted bit at position \(2\) is one.

The parser reads terminal zeros, the two separators, and the candidate \(x\), and checks \(R_{\mathrm{pad}}(m)=72m+72\).  It then inspects the cell immediately north of the recovered leading one.  A \(\mathsf{work_b}\) accepts the high-zero continuation; a bit letter or sentinel rejects.  Thus a successful parse occurs exactly when \(n=E_{\mathrm{side}}(x)\).

After a successful parse, the machine writes \(\mathsf{markL}\), then
\(2^{p_U(m)}+1\) blanks, a copy of \(x\), another
\(2^{p_U(m)}+1\) blanks, and \(\mathsf{markR}\).  It restores parser scratch and starts \(U_{\log}\) on the copied bit letters.  The head cannot reach either mark during \(2^{p_U(m)}\) steps.  The copy remains between the sentinels because
\[
 2^{p_U(m)+1}+L_{\mathrm{enc}}(x)+m+6
 \leq2^{p_U(m)+3}
 \leq8E_{\mathrm{side}}(x)^{1/4}
 <E_{\mathrm{side}}(x)-2.
\]

The parser takes at most \((L_{\mathrm{enc}}(x)+4)^4\) steps, the copy at most
\(m2^{2p_U(m)+6}\), and the simulation at most \(2^{p_U(m)}\).  For
\(n=E_{\mathrm{side}}(x)\geq2^{74}\), the first term is at most \(n/4\), the second at most \(64n^{5/8}\leq n/4\), and the third at most \(n^{1/4}\leq n/4\).  Hence the real computation enters a sink by time \(n-1\).  No transition enters a sentinel, and the remaining steps are uniquely supplied by the appropriate stay-sink.

The null branch contributes one accepting path for every \(n\geq5\).  On a coded side, real accepting branches are in bijection with accepting paths of \(U_{\log}\); off-code, the parser enters the rejecting sink.  This proves the branch-count formula and all remaining assertions.
\end{proof}

\section{Counting Paths with Wang Tile Tableaux}\label{sec:tableau}

Each step of the size-encoded machine is represented by one column of Wang tiles. The local rules record the unique head and its labelled transition, yielding a bijection between accepting branches and tableaux with prescribed boundaries. A separate count of successor columns under arbitrary boundary data provides the perimeter estimate needed in the geometric construction.

Let \(M\) be a finite one-tape machine with state set \(Q_{\mathrm{st}}\), tape alphabet \(\Gamma_{\mathrm{tp}}\), accepting state \(q_{\mathrm{acc}}\), sentinel \(\mathsf{\#_s}\), and labelled-arrow set \(\Delta_{\mathrm{arr}}\).  Put
\[
 \Sigma_{\mathrm{cell}}
 =\{\plain{a}:a\in\Gamma_{\mathrm{tp}}\}
 \sqcup
 \{\hd{q}{a}:q\in Q_{\mathrm{st}},\,a\in\Gamma_{\mathrm{tp}}\},
 \qquad
 s_{\mathrm{cell}}
 =|\Gamma_{\mathrm{tp}}|(1+|Q_{\mathrm{st}}|).
\]

\begin{thm}[Wang tableau for counting computation paths]\label{thm:tableau}
For every fixed one-head Turing machine $M$, there exists a finite Wang system $W_{\mathrm{tab}}$ with the following properties. For any legal initial configuration of height $H$ and any time horizon $T$, accepting labelled-arrow branches of length $T$ correspond bijectively to $W_{\mathrm{tab}}$-tilings of a $T$-by-$H$ rectangle with prescribed initial, accepting, and sentinel boundary data. Distinct labelled arrows yield distinct tableaux.

Furthermore, writing $s_{\mathrm{cell}}$ for the number of possible tape-cell symbols and $d_{\mathrm{arr}}$ for the number of labelled arrows, a prescribed east-boundary colour word of height $n$ has at most \(K_{\mathrm{col}}=4s_{\mathrm{cell}}^4(d_{\mathrm{arr}}+1)^4\) vertically matching successor columns. The number of open $n$-square tableaux therefore satisfies
\[
a_{\mathrm{tab}}(n)\le R_{\mathrm{tab}}^n,
\qquad R_{\mathrm{tab}}=4s_{\mathrm{cell}}^7(d_{\mathrm{arr}}+1)^4.
\]
\end{thm}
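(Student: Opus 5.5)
The construction I would use is the standard column-by-column computation tableau, read left to right. The horizontal axis is time, running over \(T\) columns, and the vertical axis is the tape, with rows indexed by \(-1,\ldots,H\) folded into the \(H\) cells. Each Wang tile represents one tape cell at one time step. Its west colour is the cell symbol in \(\Sigma_{\mathrm{cell}}\) before the step and its east colour is the symbol after the step. Its north and south colours carry a \emph{signal}, which is either ``none'' or a pair (labelled arrow \(\alpha\), role). The roles tell the neighbouring cell that the head is arriving from below, arriving from above, or was the source. Concretely, there are four kinds of tile. A head tile has west colour \(\hd{q}{a}\) and a chosen arrow \(\alpha=(q,a,q',b,D)\) recorded in its type. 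It outputs \(\plain{b}\) east if \(D\neq S\), and \(\hd{q'}{b}\) if \(D=S\), and it emits the signal \((\alpha,D)\) on its north or south edge according to \(D\). A receiving tile has plain west symbol \(c\), accepts the signal on the matching edge, and outputs \(\hd{q'}{c}\). Passive tiles copy \(\plain{c}\) with null signals. Sentinel cells are fixed tiles, and no tile sends a signal into them. Recording \(\alpha\) in the tile type is what makes distinct labelled arrows give distinct tableaux, even when two arrows induce the same update. The boundary data are as follows: the west boundary is the initial configuration, the east boundary is required to be an accepting head state (a column in the accepting stay-sink), and the north and south boundaries carry sentinel and null colours.

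The bijection is proved by induction on columns. Suppose column \(t\) has west word equal to a legal extended configuration with exactly one head. Then the matching constraints force the following: exactly one head tile, a choice of an applicable \(\alpha\), exactly one receiver at the correct neighbour, and passive tiles elsewhere. The east word is therefore the successor configuration under \(\alpha\). Conversely, every branch step gives such a column, so tableaux correspond exactly to labelled branches of length \(T\) whose final configuration is accepting. Inapplicable arrows would need a receiver on a sentinel tile, and none exists.

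For the counting bound I would work with \emph{open} columns, where boundary colours are arbitrary and the west word need not be legal. Fix the west-side colour word of height \(n\) (the statement speaks of the east boundary of the previous column). The key claim is that a column is determined by its west word together with a bounded amount of extra data. At every cell whose west colour is plain and which is not adjacent to a signal, the tile is forced to be passive. A signal can only be emitted by a head tile, and it is absorbed by the adjacent receiver. The main obstacle is that in an open column the west word can contain many heads, so the number of local choices is not obviously bounded. I expect this to be handled by designing the vertical colours so that a ``head-seen'' parity bit is propagated along the column. Each column then contains at most one active head pair in its interior, plus at most one unmatched signal at each of the two open ends. The free data are then at most \(4\) positions of the head region and its ends, each with a choice among \(s_{\mathrm{cell}}\) symbols and \(d_{\mathrm{arr}}+1\) arrow/null options. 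This gives the count \(K_{\mathrm{col}}=4s_{\mathrm{cell}}^4(d_{\mathrm{arr}}+1)^4\). The factor \(4\) accounts for the propagated bit's boundary values, and the fourth powers account for the bounded number of exceptional cells. If the direct design does not give this, I would add the bit as an extra vertical colour and redo the case analysis.

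Finally, the bound on \(a_{\mathrm{tab}}(n)\) follows from a product count. The first column has at most \(s_{\mathrm{cell}}^{n}\cdot(\text{bounded})\) choices. Bounding more crudely, each of the \(n\) columns has at most \(s_{\mathrm{cell}}^3K_{\mathrm{col}}\) choices once the previous column is fixed, where the \(s_{\mathrm{cell}}^3\) factor absorbs the free west word of the first column distributed per column. Equivalently, the first column contributes at most \(s_{\mathrm{cell}}^n\cdot K_{\mathrm{col}}\), and the total is at most \(s_{\mathrm{cell}}^n K_{\mathrm{col}}^n\le (4s_{\mathrm{cell}}^7(d_{\mathrm{arr}}+1)^4)^n = R_{\mathrm{tab}}^n\).
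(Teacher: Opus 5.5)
Your plan is correct and follows the same overall scheme as the paper, but your Wang system is built differently. The shared scheme is: one column per step, the labelled arrow stored in the tiles, a monotone bit along each column certifying at most one head, a copy rule away from the head, bounded boundary data per column given the incoming word, and a product over columns.

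The paper uses radius-one windows \(\mathsf{hcol}(X)\), \(X\in\Sigma_{\mathrm{cell}}^3\), as horizontal colours. Its vertical colours carry overlapping symbol pairs, two order chains (old and new word), and one northward arrow token, so apart from that token the update reaches the head's neighbours through the overlapping new-word windows rather than by a handshake. This costs \(s_{\mathrm{cell}}^{3n}\) incoming words plus ghost and endpoint symbols, which is where the four symbols and four token slots in \(K_{\mathrm{col}}\) come from. In return, the extended indices \(-1\) and \(H\) live as ghost symbols, and the design matches the later scanner interface \(\mathsf{hcol}(\Emit(s,t,b))\).

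Your single-symbol, signal-passing design is simpler and gives better constants. A column is fixed by three things: the flag value at its south end, the arrow of its at most one head, and at most one incoming signal at each open end. Hence it has at most \(2(d_{\mathrm{arr}}+1)^3\le K_{\mathrm{col}}\) completions, and \(a_{\mathrm{tab}}(n)\le s_{\mathrm{cell}}^n(2(d_{\mathrm{arr}}+1)^3)^n\le R_{\mathrm{tab}}^n\). You should state and derive that count directly rather than reverse-engineer the \(s_{\mathrm{cell}}^4\) factor, which has no counterpart in your design.

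Two points must be made precise:
\begin{enumerate}
\item The bit has to be a monotone flag, equal to \(0\) below and \(1\) above a head tile and constant on every other tile, and not a parity bit. A parity bit does not limit the number of heads, so incoming words with many heads could again have exponentially many completions.
\item Open tilings are counted by tile type, so no two types may share all four colours. For example, a passive tile on \(\plain{\mathsf{\#_s}}\) alongside your fixed sentinel tile would give a column of sentinels \(2^n\) completions.
\end{enumerate}
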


\begin{proof}
For an old radius-one word \(X=(x_-,x_0,x_+)\), define its local update under an applicable arrow \(\alpha=(q,a,q',b,D)\) when \(x_0=\hd{q}{a}\), the two neighbours are tape cells, and a left or right movement does not enter a sentinel.  The updated triple is
\[
 \Theta_\alpha(X)=
 \begin{cases}
  (\hd{q'}{\ell},\plain{b},\plain{r}),&D=L,\\
  (\plain{\ell},\hd{q'}{b},\plain{r}),&D=S,\\
  (\plain{\ell},\plain{b},\hd{q'}{r}),&D=R,
 \end{cases}
\]
where \(x_-=\plain{\ell}\) and \(x_+=\plain{r}\).  For a window whose head is immediately south or north of its centre, let
\(\rho_\alpha\) and \(\lambda_\alpha\) perform the corresponding incoming-head update at the centre.

Horizontal colours are symbols \(\mathsf{hcol}(X)\) for
\(X\in\Sigma_{\mathrm{cell}}^3\).  A candidate tile carries two order-bit chains, one for the old word and one for the new word.  Write their south/north values as \((o_S,o_N)\) and \((n_S,n_N)\), both in \(\bits^2\).
For the old chain we require \(o_N=o_S\) when the old centre is not a head and \((o_S,o_N)=(0,1)\) when it is a head; the new chain satisfies the analogous rule with the new centre.  Thus each chain is nondecreasing and can jump only once, at its unique head.  Vertical colours have the form \(\mathsf{vcol}(P_{\mathrm{old}},P_{\mathrm{new}},o,n,\mu)\),
where both \(P\)'s are pairs of cell symbols, \(o,n\in\bits\) are the old- and new-chain values on that horizontal boundary, and
\(\mu\in\Delta_{\mathrm{arr}}\sqcup\{\bot\}\).  The south colour stores \((o_S,n_S)\), and the north colour stores \((o_N,n_N)\).

A candidate tile contains old and new triples \(X,Y\), these four order bits, and south and north tokens.  It is admitted in exactly one of four cases.
\begin{enumerate}
\item[\textup{H}.] The old centre is a head.  The south token is \(\bot\), the north token is an applicable arrow \(\alpha\), and
\(Y=\Theta_\alpha(X)\).
\item[\textup{A}.] The old south neighbour is the head.  The south token is
\(\alpha\), the north token is \(\bot\), and the new centre is
\(\rho_\alpha(x_0)\).
\item[\textup{B}.] The old north neighbour is the head.  Both tokens are
\(\bot\), and the new centre equals \(\lambda_\alpha(x_0)\) for at least one arrow \(\alpha\) having the source displayed by that head.  This is an existence condition on the tile datum: if several labelled arrows witness the same resulting datum \((X,Y,o_S,o_N,n_S,n_N)\), they do \emph{not} create duplicate tile types in Case~B.
\item[\textup{Q}.] The old triple has no head.  Both tokens are \(\bot\), and the new centre equals the old centre.
\end{enumerate}
Both triples are required to contain at most one head.  The east and west colours are \(\mathsf{hcol}(X)\) and \(\mathsf{hcol}(Y)\).  The south vertical colour stores the two southern old and new symbols, the lower order bits, and the south token; the north colour stores the two northern symbols, the upper bits, and the north token.  The inventory is finite and has the asserted cardinality.  For the conditioned tableau used in the bijection, choose the south sentinel colour with both reconstructed symbol pairs equal to sentinels, both order-chain values equal to zero, and empty token; choose the north sentinel colour analogously with both order-chain values equal to one and empty token.

Vertical matching overlaps consecutive symbol pairs and identifies the corresponding boundary bits.  Hence a matching column reconstructs unique extended old and new words on indices \(-1,\ldots,H\), together with one old order chain and one new order chain.  Each chain is constant except for a possible \(0\)-to-\(1\) jump at a head centre, so it certifies at most one head.  When its prescribed south boundary value is zero and north boundary value is one, it certifies exactly one head.  Sentinel vertical colours additionally force the two bottom and two top symbols to be \(\plain{\mathsf{\#_s}}\), so every certified head lies strictly between them.

At the unique old head, Case H chooses one arrow and places its label on the north edge.  Vertical matching passes that token to the Case A tile.  The overlap with \(Y=\Theta_\alpha(X)\) simultaneously fixes the Case B centre for the same arrow.  Every window at distance at least two is Case Q and preserves its centre.  Thus the old word and the one arrow determine the entire visible new word.  This proves that the neighbouring windows make no independent choice.  Because the token is the labelled arrow itself, two same-effect arrows give different vertical colours.

Time runs from east to west.  Fixing the east boundary to the radius-one windows of \(\Cfg_0\) determines the first old word.  Horizontal matching identifies each column's new word with the next column's old word.  The sentinel colours make every word a legal configuration, and the accepting west-colour set requires the final unique head to have state \(q_{\mathrm{acc}}\).  Conversely, a branch determines all triples, bits, and tokens and hence one tiling.  These constructions are inverse.

For the open-column bound, fix its old-word triples.  A column is encoded injectively by the two south-boundary order bits (one for each chain), the four new endpoint and ghost symbols, and four token slots containing the exterior tokens and the possible visible-head token.  The old word together with its chosen south-boundary bit fixes the old order chain.  Cases H--Q, the new south-boundary bit, and the recorded endpoint data recover the new word, its order chain, and every token.  Therefore the number of columns is at most \(2^2s_{\mathrm{cell}}^4(d_{\mathrm{arr}}+1)^4=K_{\mathrm{col}}\). The east boundary of an open \(n\)-square has at most
\((s_{\mathrm{cell}}^3)^n\) words.  Scanning \(n\) columns westward gives
\[
 a_{\mathrm{tab}}(n)
 \leq(s_{\mathrm{cell}}^3)^nK_{\mathrm{col}}^n
 =R_{\mathrm{tab}}^n.
\]
Lemma~\ref{lem:column-reconstruction} records the complete boundary-data reconstruction behind this estimate. The identity condition in Case Q is essential for the perimeter bound: without equality of old and new centres, a zero-head column would have exponentially many successor words, and iteration would give an area-exponential rather than perimeter-exponential bound.  Finally, \(\Theta_\alpha\) is undefined for a move onto a sentinel, which proves sentinel consistency.
\end{proof}

\begin{lem}[Reconstruction of open successor columns]\label{lem:column-reconstruction}
Fix the east-colour word of a vertically matching column in the Wang tableau constructed above. Such a column is uniquely determined by at most two initial order bits, four new endpoint or ghost symbols, and four arrow-token slots. In particular, its number of possible completions is at most
\(4s_{\mathrm{cell}}^4(d_{\mathrm{arr}}+1)^4=K_{\mathrm{col}}\), independently of the column height.
\end{lem}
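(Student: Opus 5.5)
The plan is to reconstruct the column cell by cell from south to north, using the east colours as given old data. At each step I would show that the next tile's datum is forced by information already fixed plus a bounded amount of data charged to one of the recorded slots. Fixing the east-colour word means fixing every old triple \(X_j=(x_{j-1},x_j,x_{j+1})\) for \(0\le j<H\). By overlap, this determines the old extended word on indices \(-1,\ldots,H\), including the two ghost symbols just outside the column. The unknowns are then:
\begin{enumerate}
\item[(i)] the two order chains;
\item[(ii)] the new triples \(Y_j\);
\item[(iii)] the tokens on the horizontal edges.
\end{enumerate}
Vertical matching identifies the north data of tile \(j\) with the south data of tile \(j+1\). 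So once the south vertical colour of the bottom tile is known, propagation is deterministic whenever each tile is forced by its old triple together with its south colour.

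\textbf{Order chains.} The old chain is forced by the old word once its south-boundary bit \(o_S\) at the bottom tile is chosen. The chain is constant except for a jump exactly at an old head centre, and the old word is known. If the jump from \(1\) would be needed, no completion exists, so this case only loses configurations. The new chain is likewise determined by its initial bit together with the new word. This accounts for the factor \(2^2\).

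\textbf{Old head present.} There is at most one old head, by the order-chain certificate. Suppose it lies at an interior index \(h\).
\begin{enumerate}
\item[(a)] Tile \(h\) is in Case H, so its north token is an arrow \(\alpha\). This consumes one token slot, the "visible-head token".
\item[(b)] \(Y_h=\Theta_\alpha(X_h)\) is then forced.
\item[(c)] Tile \(h+1\) is in Case A with centre \(\rho_\alpha(x_{h+1})\), forced.
\item[(d)] Tile \(h-1\) is in Case B. Its centre \(\lambda_\alpha(x_{h-1})\) is not recorded by a token, but it is fixed by overlap with \(Y_h\).
\item[(e)] All other tiles are in Case Q with centres equal to old centres.
\end{enumerate}
This fixes the new word on indices \(0,\ldots,H-1\). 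The new ghost symbols at \(-1\) and \(H\) are not seen by any centre, and neither are the endpoint entries of \(Y_0\) and \(Y_{H-1}\), which can be reached only through triples of Case B or A type near the ends. I would record these four new endpoint/ghost symbols explicitly; they account for the factor \(s_{\mathrm{cell}}^4\).

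\textbf{Head at or outside the boundary.} If the old head sits at a ghost index or at an endpoint, the Case A or B tiles at the ends need an arrow that is carried on the exterior south or north edge. Cases A and H at the boundary also involve tokens on the two exterior horizontal edges. I would bound the tokens as follows:
\begin{enumerate}
\item[(a)] the south exterior token and the north exterior token take two slots;
\item[(b)] one slot goes to the interior head's arrow;
\item[(c)] one spare slot covers the configuration where the head is a ghost adjacent to an endpoint.
\end{enumerate}
Each slot ranges over \(\Delta_{\mathrm{arr}}\sqcup\{\bot\}\), giving \((d_{\mathrm{arr}}+1)^4\). Interior edges carry \(\bot\) except the single edge north of a Case H tile, so every token is either forced to \(\bot\) or equals a recorded slot.

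\textbf{Injectivity and the main obstacle.} Given the slot data, a south-to-north induction reconstructs every tile datum \((X_j,Y_j,o,n,\text{tokens})\). It uses that Cases H, A, B, Q are mutually exclusive given the old triple, because the head position relative to the centre decides the case. Case B carries no multiplicity because tile types are defined by datum, not by witnessing arrow. The main obstacle is this boundary bookkeeping: showing that the head-near-endpoint situations never need more than four token slots and four free symbols. That would be a finite case check on the head index in \(\{-1,0,1,H-2,H-1,H\}\), and I would carry it out explicitly. With that in hand, the map from completions to slot data is injective, which gives the bound \(K_{\mathrm{col}}\), uniformly in the height \(H\).
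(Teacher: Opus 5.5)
Your proposal is correct and follows essentially the same route as the paper: the old chain is fixed by its south bit, an interior old head is handled by Cases H/A/B/Q with a single visible token, and all boundary influence is charged to recorded ghost/endpoint symbols and exterior token slots, giving the count $2^2 s_{\mathrm{cell}}^4(d_{\mathrm{arr}}+1)^4$. Two small slips need fixing, though your slot allocation already absorbs both. First, a Case B tile at the top end carries no arrow (both its tokens are $\bot$), so its new centre is covered by a recorded endpoint symbol rather than by an exterior token. Second, the order chain bounds heads only at centres, so ghost heads at $-1$ or $H$ can coexist with an interior head, and not all other tiles need be Case Q.
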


\begin{proof}
The east colours determine all old radius-one windows. Their overlaps fix the old symbols throughout the column and the old endpoint data that they expose. Each vertically matched order-bit chain is nondecreasing, and its only permitted change is the transition from $0$ to $1$ at a head centre. Hence the old chain is fixed by its south bit and its old symbols. The new chain is similarly fixed once its south bit and the new symbols have been recovered.

The reconstruction of the new centres is local except at the old head, whose uniqueness follows from the old order chain. All possible arrangements are covered by the following cases. If an old head is strictly inside the visible column, Case H fixes its outgoing labelled arrow by the recorded head token; the neighbouring Case A and Case B rules implement the unique resulting update, while Case Q leaves all more distant centres unchanged. If a head touches the first or last visible radius-one window, its possible influence is confined to the boundary row and is recorded in the endpoint or ghost symbols and exterior token slots. If no old head is visible, Case Q fixes every centre whose old window is free of a head; only those bounded boundary windows that may involve an exterior head require the recorded boundary symbols. In every case, there is no independent choice at each additional interior row.

The overlap constraints then recover the new radius-one windows, the vertical colour data, and the order chains. The arrow token at a visible head records the actual labelled transition, so distinct same-effect arrows are counted separately. Conversely, in Case B, different arrow witnesses giving an identical local tile datum are identified as a single tile type. Thus two completing columns with the same prescribed east word and the same bounded boundary data coincide tile by tile. There are at most $2^2$ choices for the two initial order bits, $s_{\mathrm{cell}}^4$ choices for the four endpoint and ghost symbols, and $(d_{\mathrm{arr}}+1)^4$ choices for the token slots, proving the bound.
\end{proof}

\section{Counting Open-Square Wang Tilings}\label{sec:wang}

The tableau correspondence in the preceding section uses prescribed boundary data, while the target counting problem has open boundaries. A side-activity amplifier assigns a dominant multiplicity to the intended boundary class. Coupling it to a ripple counter and an edge scanner recovers the initial configuration from the side length, and the same construction supplies a uniform bound for all remaining boundary classes. The resulting perimeter estimate for arbitrary rectangles will be used in the geometric compiler.

\begin{thm}[Boundary amplification]\label{thm:amplifier}
For every integer $Q\ge1$, there is a fixed Wang system with four side-activity tracks such that, on an open $n$-square for $n\ge2$, the all-active boundary class has exactly $Q^{4n-4}$ tilings and each proper side-activity class has at most $Q^{3n-2}$ tilings. A private-colour dummy tile can be adjoined, contributing exactly one additional tiling.
\end{thm}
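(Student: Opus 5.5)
The plan is to build the amplifier as a \emph{frame} construction: four one-dimensional ``edge tracks'' glued to a featureless interior. Labelled copies of the edge tiles supply the factor $Q$ per boundary site. Colour constraints make each side all-or-nothing, and the interior stays rigid. I would then read off both counts by a direct enumeration over side-activity classes.

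\textbf{Construction.} Use a neutral colour $c_0$ and, for each side $\sigma\in\{N,E,S,W\}$, an exterior colour $\bot_\sigma$ and a track colour $e_\sigma$.
\begin{enumerate}
\item The interior tile $I$ has all four colours equal to $c_0$.
\item For $\sigma=S$ there are $Q$ labelled edge tiles. Each has south colour $\bot_S$, north colour $c_0$, and east and west colours $e_S$. The edge tiles for $N,E,W$ are defined symmetrically.
\item For each corner, for instance south-west, there are $Q$ labelled corner tiles. Each has south $\bot_S$, west $\bot_W$, east $e_S$ and north $e_W$.
\item To let a side be inactive, add end-cap variants. Each is an unlabelled single tile that terminates a track on the inactive side, for example a south-west corner-position tile with east colour $e_S$ and north colour $c_0$.
\end{enumerate}
The colours $\bot_\sigma$ have no partner, so a tile exposing $\bot_\sigma$ can only sit on side $\sigma$ of the square. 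The colours $e_\sigma$ occur only on edge and corner tiles of side $\sigma$, which propagates ``edge-ness'' along an entire row or column. The four side-activity tracks are the four $e_\sigma$-chains, and a side-activity class records which of these chains are present.

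\textbf{Counting.}
\begin{enumerate}
\item First, I would show that any cell not carrying a $\bot$-colour and not adjacent along a track is forced to be $I$. Every non-$I$ tile has an $e_\sigma$ or $\bot_\sigma$ colour that must be matched or exposed, and the only tiles offering $c_0$ on the relevant inward side are $I$ and the inward face of an edge tile.
\item In the all-active class, the entire boundary ring of $4n-4$ cells consists of edge and corner tiles, each chosen independently among $Q$ labels, and the interior is all $I$. This gives exactly $Q^{4n-4}$ tilings.
\item In a proper class, at least one side is inactive. Its cells (apart from possible caps) are $I$ or unlabelled caps, each determined uniquely once the class is fixed. The labelled sites lie on at most three sides, and three sides of an $n$-square cover $3n-2$ cells. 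This gives the bound $Q^{3n-2}$.
\item Finally, the dummy tile gets two fresh private colours, $p$ on its north and south and $p'$ on its east and west. It matches only itself, so it tiles every square in exactly one way and never mixes with other tiles. It therefore contributes exactly one extra tiling.
\end{enumerate}

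\textbf{Main obstacle.} The hard step is the open-boundary rigidity. Exposed colours impose nothing, so one must rule out spurious tilings. Examples are a track that starts mid-side and ends at an exposed edge, a partially edged side, or corner tiles placed in the wrong configuration.

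I would handle this by having the $e_\sigma$ colours appear only on tiles that also expose $\bot_\sigma$. Then an $e_S$-track can only run along the bottom row. I would also ensure that an $e_\sigma$ colour can never face outward except at a genuine corner of the square, by giving the track ends only corner or cap tiles. If that is not fully enforceable, I would instead define the side-activity classes by the exposed boundary colour word. In that case the bound for proper classes must absorb a constant number of cap positions, and I would reserve a $Q$-slack for this, using $3n-2$ rather than $3n-3$.
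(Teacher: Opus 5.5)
\textbf{The end caps break the proper-class bound.} The failing step is Counting~3, where you assert that once the class is fixed, the cells of an inactive side are uniquely determined. In an open tiling, exposed edges impose nothing. An unmatched colour can force a tile \emph{towards} the boundary, but nothing can stop a track colour $e_\sigma$ from facing outward. So an $S$-edge tile can always sit in the corner cell $(n-1,0)$ with $e_S$ exposed to the east, and your plan to make track ends ``only corner or cap tiles'' cannot be enforced.

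At every corner where an active side meets an inactive one, two fillings fit: the labelled edge tile ($Q$ choices) and the unlabelled cap ($1$ choice). Take the active set $\{W,S,N\}$, even with each cap carrying both exterior colours of its corner so that it is confined there:
\begin{itemize}
\item the west column contributes $Q^n$;
\item the $2(n-2)$ non-corner cells of the bottom and top rows contribute $Q^{2n-4}$;
\item the east column between the corners is forced to be $I$;
\item each of $(n-1,0)$ and $(n-1,n-1)$ has $Q+1$ fillings.
\end{itemize}
This class therefore has $Q^{3n-4}(Q+1)^2>Q^{3n-2}$ tilings for every $Q\ge1$ and $n\ge2$.

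There is no ``$Q$-slack'' to absorb this, because three sides contain exactly $3n-2$ cells and $Q^{3n-2}$ is already tight. If a cap exposes a non-$\bot$ colour on its inactive side, it can even start a track mid-side (bottom row $I\cdots I$, cap, $S$-edges), which breaks the all-or-nothing property. Your fallback of refining classes by exposed colour words proves a different statement with more than fifteen proper classes. The theorem, and its use in the junk estimate of Theorem~\ref{thm:wang-product}, concern the sixteen side-activity sets.

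\textbf{The repair is to delete the caps} and let a track simply run off the square at the end of its side. Then:
\begin{itemize}
\item an $e_S$ edge can only be continued by an $S$-edge or the $SE$ corner tile;
\item $I$ can only be continued by $I$, or by an $E$-edge in the last column;
\item hence each side is all-or-nothing, and every cell's tile is fixed by the active set up to its label;
\item each class then has exactly $Q^{\Sites(n,S)}$ tilings, with $\Sites(n,S)\in\{0,n,2n-1,2n,3n-2,4n-4\}$.
\end{itemize}

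This is essentially the paper's argument. Every tile carries four activity booleans. Dead labels $d^\sigma_q$, absent from the opposite colour position, confine active tiles to their side. Each side's activity is carried as a chain coordinate in the perpendicular colours, so it is constant along the side and simply exposed at its ends. A tile with two adjacent active sides uses one shared index $q$, and no termination tile exists.

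Using $Q$ labelled copies of a tile type, instead of putting labels in exposed colours, is fine under the paper's type-level convention. Your dummy-tile argument is correct.
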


\begin{proof}
Choose pairwise distinct atomic labels
\[
 g_W,g_E,g_S,g_N,\ a_S,a_N,\ g_{Wc},g_{Ec},g_{Sv},g_{Nv},a_W,a_E
\]
and \(d^W_q,d^E_q,d^S_q,d^N_q\) for \(1\leq q\leq Q\).
Horizontal colours are distinct symbols \(H(\alpha,\beta,\gamma,\delta)\);
vertical colours are distinct symbols \(V(\alpha,\beta,\gamma,\delta)\).

A tile carries booleans \(W,E,S,N\) and indices \(q_W,q_E,q_S,q_N\).
Adjacent active sides on the same tile must use one common \(q\).  Its colours are
\[
\begin{aligned}
 W_{\mathrm{col}}&=H(w_W,g_E,s_c,n_c),&
 E_{\mathrm{col}}&=H(g_W,e_E,s_c,n_c),\\
 S_{\mathrm{col}}&=V(w_c,e_c,s_S,g_{Nv}),&
 N_{\mathrm{col}}&=V(w_c,e_c,g_{Sv},n_N),
\end{aligned}
\]
where
\[
\begin{array}{lll}
 w_W=d^W_{q_W}\text{ if }W,\quad w_W=g_W\text{ otherwise},&
 w_c=a_W\text{ if }W,\quad w_c=g_{Wc}\text{ otherwise},\\
 e_E=d^E_{q_E}\text{ if }E,\quad e_E=g_E\text{ otherwise},&
 e_c=a_E\text{ if }E,\quad e_c=g_{Ec}\text{ otherwise},\\
 s_S=d^S_{q_S}\text{ if }S,\quad s_S=g_{Sv}\text{ otherwise},&
 s_c=a_S\text{ if }S,\quad s_c=g_S\text{ otherwise},\\
 n_N=d^N_{q_N}\text{ if }N,\quad n_N=g_{Nv}\text{ otherwise},&
 n_c=a_N\text{ if }N,\quad n_c=g_N\text{ otherwise}.
\end{array}
\]
Inactive indices do not occur in any colour and are normalized to \(1\).

No east colour contains \(d^W_q\) in its first coordinate.  Hence a \(W\)-active tile can occur only on the geometric west side.  The three symmetric assertions follow for \(E,S,N\).  Along the west side, the north and south colours carry either \(a_W\) or \(g_{Wc}\); since these differ, activity is constant down the whole side.  The other three chain coordinates give the same all-or-none conclusion.

For \(n\geq2\), let \(S\subseteq\{W,E,S,N\}\) be the active-side set and write its indicator bits as \(w,e,s,t\). The number of distinct active sites is \(\Sites(n,S)=wn+en+s(n-w-e)+t(n-w-e)\). According to the active sides, this takes one of the six values \(0\), \(n\), \(2n-1\), \(2n\), \(3n-2\), and \(4n-4\). A choice of one \(q\) at each such site determines a unique tiling.  Horizontal matching follows because an internal left tile is not \(E\)-active, an internal right tile is not \(W\)-active, and the north and south chain coordinates agree along their rows.  Vertical matching is symmetric.  Conversely, dead colours expose every \(q\).  Hence the side class has exactly
\(Q^{\Sites(n,S)}\) tilings.  Summation over the sixteen subsets gives the displayed formula.

When \(n=1\), the single tile lies on all four sides. The activity inventories contribute \(1\), \(4Q\), \(4Q\), \(2Q^2\), \(4Q\), and \(Q\) for zero sides, one side, adjacent pairs, opposite pairs, triples, and all four sides, respectively.  Their sum is \(1+13Q+2Q^2\).

The west-only set contains the inactive tile and the \(Q\) purely west-active tiles.  It has one constant inactive tiling and \(Q^n\) active-west tilings.  Finally, a private-colour tile cannot meet an amplifier tile.  Connectedness of the square grid therefore makes its mode constant and adds exactly one tiling.

A raw product of four independent wires would give \(Q^{4n}\) full frames.  The shared-\(q\) rule removes one duplicate index at each corner and gives the required \(Q^{4n-4}\).
\end{proof}

We use a four-tile ripple counter.  Its tile indexed by input bit \(b\) and carry \(c\) has west colour \(B_b\), south colour \(C_c\), east colour
\(B_{b\oplus c}\), and north colour \(C_{bc}\).  West and south boundary words determine a unique tiling.  With west bits zero and south carries one, column \(x\) represents \(x\), the east boundary represents the side length, and the top carries are zero.

\begin{thm}[Fixed open-square Wang counter]\label{thm:wang-product}
There are a fixed unweighted Wang tile set $W_{\mathrm{prod}}$ and positive constants
$R_0=4R_{\mathrm{tab}}577^2$ and $Q=2R_0+1$ such that, for every $n\ge5$, the total number $a_{\mathrm{prod}}(n)$ of open $W_{\mathrm{prod}}$-tilings of the $n$-square has the decomposition
\[
a_{\mathrm{prod}}(n)=Q^{4n-4}\operatorname{Acc}_\star(n)+\Junk(n),
\]
where the residual count satisfies
\[
0\le\Junk(n)\le1+15R_0^nQ^{3n-2}.
\]
\end{thm}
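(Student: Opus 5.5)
I would build \(W_{\mathrm{prod}}\) as a layered product of three fixed systems: the boundary amplifier of Theorem~\ref{thm:amplifier} with \(Q=2R_0+1\); a \emph{computation layer} made of the tableau system \(W_{\mathrm{tab}}\) of Theorem~\ref{thm:tableau} for \(M_\star\), the ripple counter, and an edge scanner; and a private-colour dummy tile. A product tile is a compatible pair (amplifier tile, computation tile). Compatibility is keyed to the activity bits. A tile that is active on a side must carry, on that side, the boundary data required for the conditioned tableau: the sentinel colours on the south and north sides, the accepting colour set on the west side, and on the east side the scanner output that reproduces \(\Cfg^\star_n\).

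The scanner works as follows. The ripple counter runs along the east strip with west bits zero and south carries one, so that in the last column it exhibits \(\binenc(n)\) bitwise. The scanner converts these bits into the \(\mathsf{one_b},\mathsf{zero_b},\mathsf{work_b}\) radius-one windows of \(\Cfg^\star_n\), with the head at index \(1\). Layering this information onto the easternmost tableau column only adds a bounded number of local colours. The counter's determinism (west and south boundary words determine a unique tiling) and the bijection of Theorem~\ref{thm:tableau} then give the following. In the all-active class, the computation layer has exactly \(\operatorname{Acc}_\star(n)\) completions: length-\(n\) accepting branches from \(\Cfg^\star_n\) with \(H=n\). The count is valid because Theorem~\ref{thm:source-machine} guarantees sinks with unique continuations within \(n-1\) steps for every \(n\ge5\). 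The amplifier layer independently contributes \(Q^{4n-4}\). This yields the main term \(Q^{4n-4}\operatorname{Acc}_\star(n)\).

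For \(\Junk(n)\), I sum over the fifteen proper activity classes and add \(1\) for the dummy tiling, which is constant by connectedness as in Theorem~\ref{thm:amplifier}. In a proper class, the amplifier layer contributes at most \(Q^{3n-2}\). The computation layer is bounded by its open count regardless of which boundary constraints are imposed. So I need a uniform bound of \(R_0^n\) on open tilings of the computation layer on an \(n\)-square, which gives \(15R_0^nQ^{3n-2}\) in total.

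The open bound is the step I expect to be the main obstacle. Theorem~\ref{thm:tableau} gives \(a_{\mathrm{tab}}(n)\le R_{\mathrm{tab}}^n\), but the counter and scanner tracks must also be perimeter-bounded. For the ripple counter this holds because the west and south words determine everything, giving at most \(4^{2n}\) open tilings. I plan to design the scanner so that each tile's data is determined by its east or south neighbour up to one of at most \(577\) local options. In particular, the scanner's state must be copied along rows exactly as Case~Q copies tape symbols, so that no interior freedom arises. Iterating columnwise then bounds the scanner-plus-counter contribution by \(4\cdot577^2\) per unit side length. Together with the tableau bound this gives \(R_0=4R_{\mathrm{tab}}577^2\), up to absorbing constants into that factor. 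I would verify this by recording, as in Lemma~\ref{lem:column-reconstruction}, a bounded boundary datum per column. Nonnegativity of \(\Junk\) is immediate because it is a count.
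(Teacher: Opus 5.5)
Your architecture matches the paper's: the amplifier at $Q$ is multiplied with the counter, tableau and scanner tracks, and a private dummy tile is adjoined. The main term comes from the all-active class. $\Junk(n)$ is bounded by the fifteen proper classes, each contributing $Q^{3n-2}$ times the coupling-free count of the computational tracks, plus one for the dummy.

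The gap is the scanner. It is the one new component of this theorem, and you describe what it should do but never construct it. Two claims are therefore unproved.

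\emph{Full-frame uniqueness.} Turning the east-edge bits into the windows of $\Cfg^\star_n$ is not a cell-by-cell recoding. A zero must become $\mathsf{work_b}$ or $\mathsf{zero_b}$ according to whether a one occurs above it. The head must be installed at index $1$, and the top two bits must be replaced by sentinels. This needs explicit nonlocal bookkeeping. In the paper this is done with phases $\mathsf{bot},\mathsf{mid},\mathsf{top},\mathsf{cap}$ (clauses Bot, Mid, PreTop, Top) and a mode forced to $\mathsf{modeHigh}$ at the northeast corner and propagated downward by $\ModeOut$. It also needs a proof that the full frame then admits exactly one scanner column.

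\emph{The open bound.} The constant $577$ is read off the statement rather than derived from any design. A per-tile choice among $577$ options, as you phrase it, would give an area-exponential count. The per-column reconstruction you defer to is not carried out.

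The paper makes the open bound trivial with a diode rather than Case-Q-style copying. There is an idle tile with private colour $\mathsf{scanIdle}$. For every pair $s,t\in S_{\mathrm{scan}}$, where $|S_{\mathrm{scan}}|=576$, there is a marked tile with west colour $\mathsf{scanIdle}$, south colour $s$, north colour $t$, and east colour $\mathsf{scanDead}$, which is no tile's west colour. Marked tiles therefore lie only in the east column. Since marked and idle tiles cannot meet vertically, the uncoupled scanner has exactly $1+576^{n+1}\le577^{2n}$ open tilings. The relation $\LegalScan$ and the emitted tableau colour $\mathsf{hcol}(\Emit(s,t,b))$ enter only as cell-local couplings on east-active product tiles, so they are simply discarded in the junk estimate.

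Two smaller points. First, the ripple counter has exactly $2^n\cdot2^n=4^n$ open tilings, not $4^{2n}$. With your figure the per-side factor would be $16R_{\mathrm{tab}}577^2$ rather than the stated $R_0$. Second, the appeal to the sink property of Theorem~\ref{thm:source-machine} is unnecessary, and that theorem asserts the property only for encoded side lengths. $\operatorname{Acc}_\star(n)$ already counts length-$n$ branches, and the tableau bijection holds for every horizon.
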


\begin{proof}
Apply Theorem~\ref{thm:tableau} to \(M_\star\).  Its tableau track contributes at most \(R_{\mathrm{tab}}^n\) open \(n\)-square patterns.  The ripple track contributes \(4^n\).

We define the scanner in full.  Let
\[
\begin{split}
 \mathsf{Buf}=\{&
 \plain{\mathsf{\#_s}},\plain{\mathsf{zero_b}},
 \plain{\mathsf{one_b}},\plain{\mathsf{work_b}},\\
 &\hd{q_{\mathrm{start}}}{\mathsf{zero_b}},
 \hd{q_{\mathrm{start}}}{\mathsf{one_b}}\}.
\end{split}
\]
Its vertical colours are
\[
 S_{\mathrm{scan}}
 =\{\mathsf{bot},\mathsf{mid},\mathsf{top},\mathsf{cap}\}
 \times\mathsf{Buf}^2
 \times\{\mathsf{lh0},\mathsf{lh1}\}
 \times\{\mathsf{modeHigh},\mathsf{modeBin}\},
\]
so \(|S_{\mathrm{scan}}|=576\).
There is one idle tile with private colour \(\mathsf{scanIdle}\) on every side.
For every \(s,t\in S_{\mathrm{scan}}\), there is a marked tile with west colour
\(\mathsf{scanIdle}\), east colour \(\mathsf{scanDead}\), south colour \(s\), and north colour \(t\).  No west colour equals \(\mathsf{scanDead}\).
Thus marked tiles occur only in the east column.  Vertical colours prevent a marked tile from meeting an idle tile vertically, so either the scanner is entirely idle or its east column is entirely marked.  Its open count satisfies \(1+576^{n+1}\leq577^{2n}\). The uncoupled counter, tableau, and scanner product therefore has at most \(R_0^n\) open tilings.

For \(b\in\bits\) and a mode \(m\), define
\[
\begin{aligned}
 \NewMid(1,m)&=\plain{\mathsf{one_b}}, &
 \NewMid(0,\mathsf{modeHigh})&=\plain{\mathsf{work_b}},\\
 \NewMid(0,\mathsf{modeBin})&=\plain{\mathsf{zero_b}}, &
 \ModeOut(0,\mathsf{modeHigh})&=\mathsf{modeHigh},\\
 \ModeOut(1,\mathsf{modeHigh})&=\mathsf{modeBin}, &
 \ModeOut(b,\mathsf{modeBin})&=\mathsf{modeBin}.
\end{aligned}
\]
Write \(s=(\phi,u,v,\lambda,m)\) and \(t=(\phi',u',v',\lambda',m')\). The relation \(\LegalScan(s,t,b)\) first requires \(u'=v\), and then exactly one of the following clauses.
\begin{enumerate}[leftmargin=4em,labelsep=.6em]
\item[\textup{Bot}.]
\[
\begin{gathered}
 \phi=\mathsf{bot},\quad\lambda=\mathsf{lh0},\quad
 u=v=\plain{\mathsf{\#_s}},\quad
 \phi'=\mathsf{mid},\quad\lambda'=\mathsf{lh0},\quad
 m=\mathsf{modeBin},\\
 v'=\hd{q_{\mathrm{start}}}{\mathsf{zero_b}}\text{ if }b=0,\qquad
 v'=\hd{q_{\mathrm{start}}}{\mathsf{one_b}}\text{ if }b=1.
\end{gathered}
\]
\item[\textup{Mid}.]
\[
 \phi=\phi'=\mathsf{mid},\quad
 \lambda=\lambda'=\mathsf{lh0},\quad
 v'=\NewMid(b,m'),\quad m=\ModeOut(b,m').
\]
\item[\textup{PreTop}.]
\[
\begin{gathered}
 \phi=\mathsf{mid},\quad\phi'=\mathsf{top},\quad
 \lambda=\mathsf{lh0},\quad\lambda'=\mathsf{lh1},\\
 v'=\plain{\mathsf{\#_s}},\quad
 m=m'=\mathsf{modeHigh},\quad b=0.
\end{gathered}
\]
\item[\textup{Top}.]
\[
\begin{gathered}
 \phi=\mathsf{top},\quad\phi'=\mathsf{cap},\quad
 \lambda=\mathsf{lh1},\quad\lambda'=\mathsf{lh0},\\
 m=m'=\mathsf{modeHigh},\quad
 v=v'=u'=\plain{\mathsf{\#_s}}.
\end{gathered}
\]
\end{enumerate}
No clause has south phase \(\mathsf{cap}\), and no clause has north phase
\(\mathsf{bot}\).  Define \(\Emit(s,t,b)=(u,v,v')\). The lookahead and mode are scanner-colour coordinates, not amplifier activity bits.

Take the product of the amplifier at \(Q\), the ripple counter, the tableau, and the scanner.  Retain a product tile only when the following cell-local couplings hold.
\begin{enumerate}
\item A west-active amplifier cell has ripple west colour \(B_0\), an accepting tableau west colour, and an idle scanner.
\item An east-active cell has a marked scanner tile, reads its ripple east bit \(b\), satisfies \(\LegalScan(s,t,b)\), and has tableau east colour
\(\mathsf{hcol}(\Emit(s,t,b))\).
\item A south-active cell has ripple south colour \(C_1\) and tableau south sentinel colour.  At the southeast corner its scanner south colour is \((\mathsf{bot},\plain{\mathsf{\#_s}},\plain{\mathsf{\#_s}},\mathsf{lh0},\mathsf{modeBin})\).
\item A north-active cell has ripple north colour \(C_0\) and tableau north sentinel colour.  At the northeast corner its scanner south colour has phase
\(\mathsf{top}\), centre \(\plain{\mathsf{\#_s}}\), lookahead
\(\mathsf{lh1}\), and mode \(\mathsf{modeHigh}\).
\item Every non-east-active cell uses the idle scanner.
\end{enumerate}
Adjoin a single all-private-colour dummy tile.  Figure~\ref{fig:wang-boundary} summarizes the boundary data in the full-frame class and the role of the unique marked scanner column.

In a full frame, the amplifier forces all four active sides.  The ripple boundary conditions give the unique incrementing counter and put the bits of \(n\) on the east edge.  The scanner diode confines marked tiles to that edge.  The bottom constraint forces Bot at row zero.  The top constraint forces Top at row \(n-1\); since no legal tile has south phase \(\mathsf{cap}\), no earlier Top is possible.  Row \(n-2\) is therefore PreTop.  Its required bit is zero because the omitted bit \(n-2\) vanishes for \(n\geq5\).  All intervening rows are Mid.

For \(j\geq0\), the \(j\)-th binary digit of \(n\) is \(\operatorname{bit}_n(j)=\lfloor n/2^j\rfloor\bmod 2\), with the least significant digit at \(j=0\).  The mode is now unique by downward recursion.  At height \(y\),
\[
 m(y)=\mathsf{modeHigh}
 \quad\Longleftrightarrow\quad
 \operatorname{bit}_n(j)=0
 \text{ for every }y\leq j\leq n-3.
\]
This follows directly from \(\ModeOut\), starting with the high mode at the top.  Therefore a zero above the leading one is emitted as
\(\mathsf{work_b}\), a lower zero as \(\mathsf{zero_b}\), and a one as
\(\mathsf{one_b}\).  Bot installs the start head at index \(1\), while PreTop and Top install the two sentinels.  PreTop explicitly checks the omitted bit \(n-2\); the ripple word already supplies the second omitted bit \(n-1=0\), so Top does not read it.  The overlap equation \(u'=v\) gives
\[
 \Emit_y=
 \bigl(\Cfg^\star_n(y-1),\Cfg^\star_n(y),\Cfg^\star_n(y+1)\bigr).
\]
The tableau bijection of Theorem~\ref{thm:tableau} then identifies its tilings with accepting branches of \(M_\star\).  Independently, the amplifier supplies one \(q\) per one of its \(4n-4\) boundary sites.  This proves the full-frame bijection.

For a proper side set, Theorem~\ref{thm:amplifier} gives at most
\(Q^{3n-2}\) amplifier tilings.  Dropping every coupling permits at most
\(R_0^n\) computational-track tilings, including every open scanner, counter, and tableau defect.  Thus each proper class has at most
\(R_0^nQ^{3n-2}\) tilings.  The private dummy colour cannot meet a product colour on the connected square grid, so it contributes one isolated tiling.  Summing the fifteen proper classes proves the junk estimate.
\end{proof}

\begin{figure}[tbp]
  \centering
  \includegraphics[width=.99\linewidth]{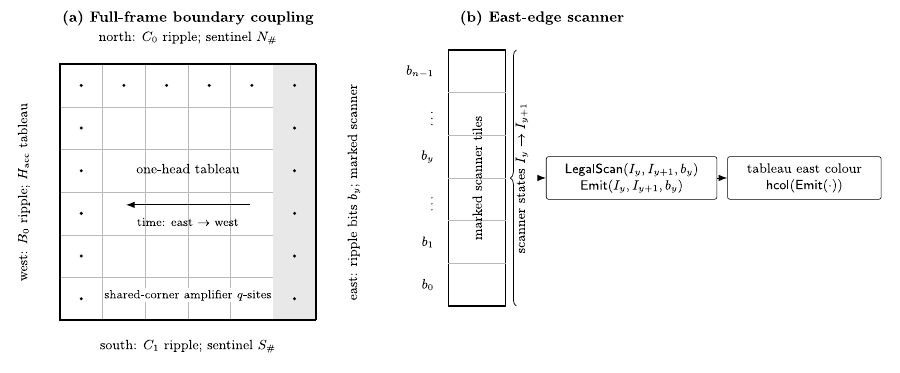}
  \caption{Full-frame boundary coupling and east-edge scanning at the Wang level.  In panel (a), the west, south, and north sides carry the fixed ripple/tableau data \(B_0,H_{\mathrm{acc}}\), \(C_1,S_\#\), and \(C_0,N_\#\), respectively.  The dots mark the shared-corner amplifier \(q\)-sites, and the shaded east column receives the ripple bits \(b_y=\operatorname{bit}_n(y)\).  Time in the one-head tableau runs from east to west.  Panel (b) isolates one scanner step: adjacent scanner states \(I_y,I_{y+1}\), together with \(b_y\), satisfy \(\LegalScan\) and determine \(\Emit(I_y,I_{y+1},b_y)\), which is exported as the tableau east colour.  Over the full marked column, these emitted colours are exactly the radius-one windows of \(\Cfg_n^\star\).}
  \label{fig:wang-boundary}
\end{figure}

\begin{thm}[Perimeter bound on arbitrary rectangles]\label{thm:rectangle-bound}
Let \(a_{\mathrm{rect}}(w,h)\) count open \(W_{\mathrm{prod}}\)-tilings of
\(\Rect_{w,h}\).  There is a fixed integer \(R_{\mathrm{rect}}\geq2\) such that
\[
 a_{\mathrm{rect}}(w,h)\leq R_{\mathrm{rect}}^{w+h}
 \qquad(w,h\geq1).
\]
One may take
\[
 R_{\mathrm{rect}}
 =1+16Q^2\cdot2R_{\mathrm{tab}}
       (|S_{\mathrm{scan}}|+1)^2.
\]
The bound also applies when \(w=1\) or \(h=1\).
\end{thm}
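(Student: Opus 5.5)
The plan is to split every open \(W_{\mathrm{prod}}\)-tiling of \(\Rect_{w,h}\) into its dummy part and its product part, and bound the product tilings track by track. The dummy tile has private colours, so on the connected grid \(\Rect_{w,h}\) it occurs either everywhere or nowhere; this gives the additive \(1\), which is harmless because \(R_{\mathrm{rect}}\geq2\). Every other tiling uses only retained product tiles. Dropping all cell-local couplings can only enlarge the count, so the number of such tilings is at most the product of the open counts of the four tracks (amplifier, ripple counter, tableau, scanner) on \(\Rect_{w,h}\). The proof then reduces to perimeter-exponential bounds for each track on rectangles, not just squares.

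The amplifier track would be handled by rerunning the argument of Theorem~\ref{thm:amplifier} on a \(w\times h\) rectangle. Active tiles are confined to their geometric sides, and activity is all-or-none along each side. A fixed side set therefore gives at most \(Q^{2w+2h}\) tilings, one \(q\) per active site. Summing over the sixteen side sets gives at most \(16Q^{2(w+h)}\). The same argument covers \(w=1\) or \(h=1\), where one cell can lie on several sides, because the count is still bounded by \(Q\) raised to the number of boundary cell–side incidences, which is at most \(2w+2h\).

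The ripple counter is determined by its west and south boundary words, so it contributes at most \(2^h\cdot2^w\). The scanner is either entirely idle or has a fully marked east column. It therefore contributes at most \(1+|S_{\mathrm{scan}}|^{h+1}\leq(|S_{\mathrm{scan}}|+1)^{2h}\). For the tableau I would reuse Lemma~\ref{lem:column-reconstruction}, which is height-independent. There are at most \((s_{\mathrm{cell}}^3)^h\) east words of height \(h\). Each of the \(w\) columns, scanned westward, has at most \(K_{\mathrm{col}}\) completions given its east word. This gives at most \(s_{\mathrm{cell}}^{3h}K_{\mathrm{col}}^{w}\leq R_{\mathrm{tab}}^{w+h}\), since \(s_{\mathrm{cell}}^3\) and \(K_{\mathrm{col}}\) both divide into \(R_{\mathrm{tab}}\). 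Multiplying the four track bounds, absorbing \(2^{w+h}\) into the factor \(2R_{\mathrm{tab}}\), and using \(w+h\geq2\) together with the \(\,\cdot^{2h}\) and \(16Q^2\) slack, each factor fits under the corresponding factor of \(R_{\mathrm{rect}}-1\) raised to \(w+h\). The final \(+1\) covers the dummy tiling.

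The main obstacle is the tableau step on arbitrary rectangles. The lemma was stated for east words of the square, and I must check that its endpoint/ghost-symbol and token-slot accounting never uses \(h=n\) or any sentinel boundary data. It should only rely on the open top and bottom rows of one column. If an exterior head at the top and at the bottom can both be involved, the four symbol slots and four token slots must already cover both ends. I would verify this first, and if necessary enlarge the constant to \(K_{\mathrm{col}}^2\), which still fits inside the stated \(R_{\mathrm{rect}}\) after squaring factors appropriately.
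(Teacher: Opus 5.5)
Your proposal is correct and follows the paper's proof essentially step for step: the isolated dummy tiling gives the $+1$, and after dropping the couplings the four track bounds $16Q^{2(w+h)}$ (one $q$ charged per active side cell), $2^{w+h}$, $(s_{\mathrm{cell}}^3)^hK_{\mathrm{col}}^w\le R_{\mathrm{tab}}^{w+h}$ (via the height-independent Lemma~\ref{lem:column-reconstruction}), and $1+576^{h+1}$ multiply to at most $(R_{\mathrm{rect}}-1)^{w+h}$. The paper applies that lemma to arbitrary heights directly, so your $K_{\mathrm{col}}^2$ fallback is not needed---which matters, because with $h=1$ and $w\to\infty$ the extra factor $K_{\mathrm{col}}^{w}$ would have to be absorbed by the per-column slack $16\cdot577^2s_{\mathrm{cell}}^3$ left in the stated $R_{\mathrm{rect}}$, and nothing in the construction guarantees that $K_{\mathrm{col}}$ is that small, so your claim that the fallback ``still fits'' is unjustified.
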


\begin{proof}
The dummy colour again gives one isolated constant tiling.  Dropping all couplings bounds every other tiling by the product of four independent track counts.

The ripple track is determined by \(h\) west bits and \(w\) south carries, so it has exactly \(2^{w+h}\) tilings.  For the tableau track, there are
\((s_{\mathrm{cell}}^3)^h\) east words and at most
\(K_{\mathrm{col}}^w\) westward column extensions.  Since
\(R_{\mathrm{tab}}=s_{\mathrm{cell}}^3K_{\mathrm{col}}\),
\[
 a_{\mathrm{tab}}(w,h)\leq R_{\mathrm{tab}}^{w+h}.
\]
The scanner is either idle or has one marked east column, giving \(a_{\mathrm{scan}}(w,h)=1+576^{h+1}\leq577^{2(w+h)}\). For the amplifier, dead colours and chain colours still force an all-or-none subset of the four geometric sides.  Charge one independent \(q\) for every active side cell before imposing shared-corner equalities.  A chosen side set therefore has at most
\(Q^{2w+2h}\)
tilings, and summing sixteen sets gives
\(16Q^{2(w+h)}\).  This charging remains valid when opposite sides occupy the same physical row or column.  In particular, for \(w=1\), west and east indices may remain independent; for \(w=h=1\), the exact inventory
\(1+13Q+2Q^2\) is below \(16Q^4\).

Multiplying the four bounds and absorbing the factor \(16\) into the exponent yields
\[
 a_{\mathrm{rect}}(w,h)
 \leq1+
 \bigl(16Q^2\cdot2R_{\mathrm{tab}}\cdot577^2\bigr)^{w+h}
 \leq R_{\mathrm{rect}}^{w+h}.
\]
\end{proof}

\section{Rectangular Decompositions of the Macrogrid}\label{sec:dissections}

Closed interfaces in the geometric realization may divide the macrogrid into several components. The analysis requires two combinatorial facts: every nontrivial rectangular dissection has a sufficiently long seam and a controlled number of possible seam configurations, and the saturation condition enforced by the carrier interfaces makes each connected component a filled rectangle. Together, these facts classify the resulting tilings and control the multiplicity of nontrivial decompositions.

\begin{thm}[Rectangular-dissection bound]\label{thm:dissections}
Let a rectangular dissection of \(\Square_n\) be a partition into at least two nonempty discrete rectangles.  A seam pair is a pair of adjacent cells in different blocks, and its seam segment is their common unit side.  Form the seam graph whose vertices are seam segments and whose edges join segments sharing an endpoint.  If \(I\) is the number of seam pairs, then:
\begin{enumerate}
\item \(I\geq n\);
\item every connected component of the seam graph contains at least \(n\) segments;
\item for fixed \(I\geq1\), the number of rectangular dissections with \(I\) seam pairs is at most \(256^I\).
\end{enumerate}
\end{thm}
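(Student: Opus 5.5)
The plan is to prove item~2 first by an extremal argument on a single seam component, deduce item~1 from it, and then prove item~3 by encoding each dissection through its seam set.

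\emph{Item 2.} The basic tool is a local rule at lattice points. At an interior lattice point \(p\) of \(\Square_n\), look at the four cells \(NW,NE,SW,SE\) around \(p\) and the four unit sides emanating from \(p\). Each side is either a seam segment or lies inside a block. If exactly one side at \(p\) is a seam segment, the other three pairs of cells are merged. Then all four cells lie in one block, and that single seam segment would separate two cells of the same block, a contradiction. If exactly two perpendicular sides are seam segments, three of the cells lie in one block that misses the fourth. That block contains an L-shaped corner, which is impossible for a rectangle. So at every interior lattice point a seam component has degree \(0\), \(2\) collinear, \(3\), or \(4\): there are no free ends and no bare corners.

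Now fix a component \(C\) and let \(x_{\min}\) be its least \(x\)-coordinate. Suppose \(0<x_{\min}<n\). The points of \(C\) on the line \(x=x_{\min}\) form vertical segments, since a horizontal segment ending there would need to continue left. Take such a maximal vertical run and its top endpoint \(p\). If \(p\) lies strictly inside the square, the local rule forces a seam to leave \(p\). Upward is excluded by maximality, and leftward contradicts minimality of \(x_{\min}\). Going rightward only gives a bare corner. Hence \(p\) lies on the top boundary, and symmetrically the bottom endpoint lies on the bottom boundary. The vertical projection of \(C\) is then all of \([0,n]\), so \(C\) contains at least \(n\) vertical unit segments.

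Otherwise \(x_{\min}=0\), and by symmetry either \(x_{\max}=n\) or the same vertical-run argument applies on the right side. In the case \(x_{\min}=0\) and \(x_{\max}=n\), the connected set \(C\) projects horizontally onto \([0,n]\). Each unit interval of that projection must be covered by a horizontal unit segment of \(C\), giving at least \(n\) segments. \emph{I expect this case analysis, including boundary endpoints and the degenerate shapes of \(C\), to be the main thing to get right.}

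\emph{Item 1.} With at least two blocks the seam set is nonempty. Item~1 is then immediate from item~2, since \(I\) equals the number of seam segments.

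\emph{Item 3.} A dissection is determined by its seam set, because the blocks are the components of the cell-adjacency graph after the seams are removed. So it suffices to encode the seam set injectively.

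\begin{enumerate}
\item Let the seam graph have components \(C_1,\dots,C_r\) with \(I_1+\dots+I_r=I\). Record the composition of \(I\) into the parts \(I_j\), which costs at most \(2^I\) choices.
\item By the argument for item~2, each component touches the boundary of the square. For each component, record a starting boundary lattice point; there are at most \(4n\) choices per component.
\item Record a depth-first traversal of each component as a walk that crosses every segment at most twice, using four direction symbols per step. This costs at most \(4^{2I_j}=16^{I_j}\) choices per component.
\end{enumerate}

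Since every \(I_j\geq n\), we have \(r\leq I/n\). The starting points therefore cost at most \((4n)^{I/n}\leq 4^I\), using \((4n)^{1/n}\leq 4\). Altogether the count is at most \(2^I\cdot4^I\cdot16^I=128^I\leq256^I\).
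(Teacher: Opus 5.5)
Your proof is correct, and for items 1 and 2 it follows a genuinely different route from the paper.

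The paper proves item 1 first, by double counting \(I=\sum_x(f(x)-1)+\sum_y(g(y)-1)\), with a separate case when some block spans a whole column. It then proves item 2 globally. Each block either has all its internal sides of length \(n\), or has all its internal sides in one seam component. So a hypothetical component with fewer than \(n\) segments spreads through the connected block-adjacency graph until it contains every seam, which contradicts item 1.

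You instead argue locally. Rectangularity forbids free ends and bare corners at interior lattice points. The extremal choice of \(x_{\min}\) (or \(x_{\max}\)) then yields either a full-height vertical run or a component spanning the full width. This proves item 2 directly and makes item 1 a corollary, reversing the paper's order of dependence. Your version extracts extra structure, namely that every seam component meets the outer boundary. The paper's version avoids lattice-point case analysis and reasons only about block sides.

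For item 3 you use the same encoding as the paper: the seam set determines the dissection, each component is recovered from a start point and an Euler tour of its doubled edges, and the component sizes form a composition of \(I\). What sharpens your bound to \(128^I\) is counting \(2^{I-1}\) compositions instead of the paper's cruder \(4^I\). The boundary start point is not actually needed, since \((n+1)^2\le4^n\) gives the same \(4^I\) factor.

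One precision point: take the walk to traverse every segment exactly twice, so that its length is exactly \(2I_j\) and the count \(16^{I_j}\) is literal. With ``at most twice'' you would be counting words of variable length. That costs only a harmless constant factor.
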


\begin{proof}
Every internal side of a rectangular block consists of seam segments, and every seam segment lies on an internal side of both adjacent blocks.

Let \(f(x)\) be the number of blocks meeting column \(x\), and \(g(y)\) the number meeting row \(y\).  Each column is partitioned into \(f(x)\) intervals, while each row is partitioned into \(g(y)\) intervals.  Hence
\[
 I=\sum_{x=0}^{n-1}(f(x)-1)
   +\sum_{y=0}^{n-1}(g(y)-1).
\]
If every \(f(x)\geq2\), the first sum is at least \(n\).  Otherwise some block contains a whole column.  Such a rectangular block has height \(n\); unless it is the entire square, one of its vertical sides is internal and contributes \(n\) seams.  The entire-square case is excluded by nontriviality.  This proves \(I\geq n\).

Consider the internal sides of one block in cyclic order.  Adjacent internal sides share a corner and therefore belong to the same seam component.  If its internal sides do not all lie in one component, then it has either one internal side or two opposite internal sides.  In either case every such side has length \(n\).  Thus each block satisfies the dichotomy:
\[
 \text{every internal side has length }n,
 \quad\text{or all its internal sides lie in one seam component.}
\]
Suppose a seam component \(C\) has fewer than \(n\) segments.  Any block incident with \(C\) must satisfy the second alternative and hence has all internal sides in \(C\).  A neighbouring block is incident along one of these sides and has the same property.  The adjacency graph of blocks is connected, since a row-then-column cell path between two blocks crosses a sequence of neighbouring blocks.  Therefore every block is incident with \(C\), and every seam belongs to \(C\).  This would give
\(|C|=I\geq n\), a contradiction.

A dissection is determined by its seam set: delete the seam adjacencies from the cell grid, and the connected components are precisely its rectangular blocks.  It remains to count possible seam sets.

Let \(\mathcal C_m\) be the connected sets of \(m\) unit grid segments with endpoints in \(\{0,\ldots,n\}^2\).  Double every edge of such a set.  The resulting connected even-degree multigraph has an Euler tour of length \(2m\).  An initial endpoint and the sequence of four possible unit steps determine the segment set, giving \(|\mathcal C_m|\leq(n+1)^2 4^{2m}=(n+1)^2 16^m\). The elementary inequality \((n+1)^2\leq4^n\) holds for all \(n\geq1\).

An admissible seam set with \(I\) segments is a disjoint union of \(r\) connected components of sizes \(m_1,\ldots,m_r\), where each \(m_j\geq n\) and
\(\sum m_j=I\).  Thus \(r\leq I/n\). For each ordered size tuple, \(\prod_j|\mathcal C_{m_j}|\leq(n+1)^{2r}16^I\leq4^I16^I\). The number of ordered positive compositions, allowing every
\(1\leq r\leq I\), is at most \(4^I\): for fixed \(r\), partial sums give at most \(2^I\) choices, and there are at most \(I\leq2^I\) choices of \(r\).
Consequently, there are at most \(4^I\cdot4^I\cdot16^I=2^{8I}=256^I\) admissible seam sets, and hence at most this many dissections.
\end{proof}

We next isolate the local graph principle used by the geometric compiler. Fix \(P_{\mathrm{mac}}\geq1\) and \(c_{\mathrm{col}}\geq0\), and write \(L_{\mathrm{mac}}(n)=P_{\mathrm{mac}}n+2c_{\mathrm{col}}\). Define
\[
 x_0=0,\quad x_n=L_{\mathrm{mac}}(n),\quad
 x_i=c_{\mathrm{col}}+P_{\mathrm{mac}}i\quad(1\leq i<n).
\]
Let
\[
 \mathsf{Mac}_{p,q}
 =\{x_p,\ldots,x_{p+1}-1\}
  \times\{x_q,\ldots,x_{q+1}-1\}.
\]
For a tiling \(\Pi\), open the macro-edge between two adjacent macrocells when one placement contains adjacent physical cells on its two sides.  This gives a spanning subgraph \(G_{\mathrm{mac}}(\Pi)\) of the \(n\)-square index grid.

A grid subgraph is saturated when, in every \(2\)-by-\(2\) index square, two consecutive present edges force all four edges.

\begin{thm}[Saturation and rectangular macroblocks]\label{thm:saturation}
The following assertions hold.
\begin{enumerate}
\item Every connected component of a saturated index graph is a filled combinatorial rectangle: all internal horizontal and vertical index adjacencies of the rectangle are present.
\item The macrocells \(\mathsf{Mac}_{p,q}\) are nonempty, pairwise disjoint, and partition \(\Square_{L_{\mathrm{mac}}(n)}\).
\item If every prototile is 4-connected, then the macrocells met by one placement lie in a single component of \(G_{\mathrm{mac}}(\Pi)\).
\item If \(G_{\mathrm{mac}}(\Pi)\) is saturated, its components induce a partition of the physical square into nonempty discrete rectangles.  If there is more than one component, this is a rectangular dissection.  No closed macro-adjacency lies inside one component.
\item Let \(d_T\) be the maximum \(\ell_\infty\)-diameter of a prototile.  If saturation fails in a \(2\)-by-\(2\) macro-window \(Q_{\mathrm{win}}\), every placement meeting the window is contained in its \(\ell_\infty\)-neighbourhood \(Q_{\mathrm{ext}}\) of radius \(d_T\).  Those placements determine the four offending edge bits, and \(Q_{\mathrm{ext}}\) has width and height at most
\[
 2(P_{\mathrm{mac}}+2c_{\mathrm{col}})+2d_T.
\]
\end{enumerate}
\end{thm}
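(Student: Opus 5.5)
I will prove the five assertions in order. Assertions (2) and (3) are direct verifications, and (4) follows by combining (1)--(3). The substantive steps are (1) and the locality claim in (5).

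For (1), I will argue that a component \(C\) of a saturated index graph \(G\) is a filled rectangle.

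1. Let \(B\) be the bounding box of \(C\). First show that \(C\) is closed under completing \(2\)-by-\(2\) squares. If a vertex \(v\in C\) has a present horizontal edge and a present vertical edge forming a corner of some index square, saturation forces all four edges of that square. So the fourth vertex lies in \(C\) with all edges present.
2. From this, derive that \(C\) is "orthogonally convex with filled edges." Take a path in \(C\). Whenever it turns (a horizontal edge followed by a vertical one), the enclosing unit square is completed. Repeatedly straightening turns shows that for any two vertices of \(C\), the whole index rectangle they span is in \(C\) with all internal edges present.
3. I would argue this by induction on the number of turns of a monotone-free path. Alternatively, and more simply: show that the set of \(C\)-vertices in each row is an interval with all horizontal edges present, and that consecutive occupied rows have equal intervals with all vertical edges present. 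Both follow by propagating the \(2\)-by-\(2\) completion along a row.
4. Hence \(C=B\) filled. Every internal adjacency of \(B\) is present, and no edge leaves \(B\), since \(C\) is a full component.

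Assertion (2) is arithmetic. Since \(P_{\mathrm{mac}}\ge1\), the sequence \(x_0<x_1<\dots<x_n\) is strictly increasing, including the endpoint cases \(x_1=c_{\mathrm{col}}+P_{\mathrm{mac}}>0\) and \(x_n-x_{n-1}=P_{\mathrm{mac}}+c_{\mathrm{col}}>0\). For \(n=1\), \(x_1=L_{\mathrm{mac}}(1)>0\). So the intervals \([x_i,x_{i+1})\) partition \([0,L_{\mathrm{mac}}(n))\), and products of such intervals partition the square.

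For (3), the argument uses 4-connectedness of the prototile. Any two cells of a placement are joined by a grid path inside the placement. Each step of that path either stays in one macrocell or crosses between adjacent macrocells, and a crossing opens that macro-edge by definition. So all macrocells met by the placement lie in one component.

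For (4), apply (1) to get index rectangles, and map each one to the union of its macrocells. By (2) this union is a physical discrete rectangle, and these unions partition the square. The claim about no closed adjacency inside a component is exactly the "filled" conclusion of (1). With more than one component, this is a rectangular dissection in the sense of Theorem~\ref{thm:dissections}.

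For (5), if saturation fails in \(Q_{\mathrm{win}}\), a placement meeting the window has every cell within \(\ell_\infty\) distance \(d_T\) of a window cell. So it lies in the radius-\(d_T\) neighbourhood \(Q_{\mathrm{ext}}\).

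1. Each edge bit of the window is defined by placements containing physical cells adjacent across that macro-edge. Such cells lie in the window, so only placements meeting the window matter. The set of those placements therefore determines the four bits.
2. For the size bound, the window spans two consecutive macro-intervals in each direction. Its width is at most \(x_{i+2}-x_i\le 2P_{\mathrm{mac}}+2c_{\mathrm{col}}\), because an end interval has length \(P_{\mathrm{mac}}+c_{\mathrm{col}}\). This is within \(2(P_{\mathrm{mac}}+2c_{\mathrm{col}})\).
3. Adding \(d_T\) on each side gives the stated bound.

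The main obstacle is making the filling argument in (1) airtight, in particular the step that saturation propagates to the whole bounding box rather than leaving an L-shaped component. I would handle this by the row-interval induction described in step 3 of (1), checking explicitly the corner configuration where two rows meet.
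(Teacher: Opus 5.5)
\textbf{Verdict: sound plan, but assertion (1) has a gap.} Your treatment of assertions (2)--(5) matches the paper's proof essentially step for step. The only difference is in (5), where you use the sharper width bound \(2P_{\mathrm{mac}}+2c_{\mathrm{col}}\). That is legitimate, because a \(2\)-by-\(2\) window exists only when \(n\geq2\). The paper uses the cruder side bound \(P_{\mathrm{mac}}+2c_{\mathrm{col}}\), which also covers \(n=1\).

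The weak point is (1). You claim that the \(C\)-vertices of each row form a single interval, and that this ``follows by propagating the \(2\)-by-\(2\) completion along a row.'' It does not. Propagation along one row is local. It cannot rule out two far-apart horizontal runs of the same row being joined through a long detour in other rows. The interval property is a global statement, and connectedness of \(C\) must be used to prove it.

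What propagation does give is a matching lemma. Suppose a maximal horizontal run \(\{a,\ldots,a'\}\times\{y\}\) has one vertical edge into row \(y+1\).
\begin{enumerate}
\item Saturation applied to consecutive squares gives a vertical edge at every \(x\in\{a,\ldots,a'\}\), together with all horizontal edges above them.
\item The upper run stops exactly at \(a'\). An edge from \((a',y+1)\) to \((a'+1,y+1)\), together with the vertical edge at \(a'\), would force the edge from \((a',y)\) to \((a'+1,y)\), contradicting maximality. The same holds symmetrically at \(a\).
\end{enumerate}
Hence each run has at most one neighbouring run above and at most one below, each with identical extent. Only at this point does connectedness of \(C\) force \(C\) to be a single stack of equal runs, i.e.\ a filled rectangle.

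The ``straightening turns''/induction-on-turns alternative is too vague to count. Replacing a corner by the opposite corner of a completed square does not by itself make progress towards filling the spanned rectangle. You should drop it in favour of the argument above.

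The paper handles the connectivity step more economically. It takes a filled rectangle \(R\subseteq C\) of maximum cardinality. Any edge of \(C\) leaving, say, the north side of \(R\) propagates along that whole side, using the filled horizontal edges of \(R\) and saturation. So \(R\) plus one row is still filled, contradicting maximality. Thus no edge leaves \(R\), and \(R=C\).

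Your run-matching lemma is the same local computation. Once you add the connectivity step, your route is equally valid. The extremal formulation just avoids setting up runs and their partners explicitly.
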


\begin{proof}
Let \(C\) be a connected component of a saturated index graph.  Choose a filled combinatorial rectangle
\[
 R=\{a,\ldots,a'\}\times\{b,\ldots,b'\}\subseteq C
\]
of maximum cardinality.  Suppose an edge of \(C\) leaves its north side at
\((x_0,b')\).  Let \(S\) be the set of \(x\) for which the vertical edge from
\((x,b')\) to \((x,b'+1)\) is present.  If \(x\in S\) and \(x<a'\), the vertical edge at \(x\) and the filled horizontal edge below it are consecutive sides of a \(2\)-by-\(2\) square.  Saturation supplies the upper horizontal edge and the vertical edge at \(x+1\).  The symmetric argument propagates left.  Hence
\(S=\{a,\ldots,a'\}\), and the full added row is filled.  This contradicts maximality.  Coordinate swaps and reflections treat the other three sides.  Therefore \(R=C\).

The sequence \(x_0,\ldots,x_n\) is strictly increasing.  Its endpoint increments are \(P_{\mathrm{mac}}+c_{\mathrm{col}}\), its interior increments are
\(P_{\mathrm{mac}}\), and when \(n=1\) its sole increment is
\(P_{\mathrm{mac}}+2c_{\mathrm{col}}\).  The induced coordinate intervals therefore partition
\(\{0,\ldots,L_{\mathrm{mac}}(n)-1\}\), proving the macrocell assertion.

Let a 4-connected placement meet two macrocells.  A cell path inside the placement crosses between either the same macrocell or two adjacent macrocells.  Every genuine crossing witnesses an open macro-edge.  The visited index cells thus form a walk in \(G_{\mathrm{mac}}(\Pi)\), proving containment in one component.

Under saturation, each component is a filled index rectangle.  The union of its consecutive coordinate intervals is \(\{x_a,\ldots,x_{a'+1}-1\}\times\{x_b,\ldots,x_{b'+1}-1\}\), a nonempty discrete rectangle.  Components partition the index grid, so these unions partition the physical square.  Filledness says every internal adjacency is open; conversely a closed adjacency separates components.

Finally, if a placement meets \(Q_{\mathrm{win}}\), every one of its cells lies within \(\ell_\infty\)-distance \(d_T\) of the meeting cell.  It is therefore contained in \(Q_{\mathrm{ext}}\).  A macrocell side length is at most
\(P_{\mathrm{mac}}+2c_{\mathrm{col}}\), so the two-by-two window has width and height at most twice that quantity; thickening adds \(2d_T\).  The placements meeting the window contain every possible witness for its four open bits, so they determine the saturation failure.
\end{proof}

\section{The Polyomino Construction}\label{sec:geometry}

The fixed Wang system is realized geometrically by a family of connected polyominoes. Each carrier stores the Wang tile identity in an interior cavity and encodes its edge colours and opening data in boundary registers. The proof establishes unique macrogrid anchoring and an exact correspondence between tilings and rectangular Wang-tiled blocks. The dissection estimates then isolate the contribution of the single-block configuration.

Let \(W=W_{\mathrm{prod}}\), \(M_W=|W|\), and for \(u\in W\) write \(W(u),E(u),S(u),N(u)\) for its west, east, south, and north colours.  Let
\(\operatorname{Col}_H,\operatorname{Col}_V\) be its horizontal and vertical colour sets, of cardinalities \(H_{\mathrm{col}}\) and \(V_{\mathrm{col}}\).  Fix bijections
\[
 \jmath\colon W\to\{0,\ldots,M_W-1\},\quad
 \iota_H\colon\operatorname{Col}_H\to\{0,\ldots,H_{\mathrm{col}}-1\},\quad
 \iota_V\colon\operatorname{Col}_V\to\{0,\ldots,V_{\mathrm{col}}-1\}.
\]
Define
\[
\begin{split}
 s_{\mathrm{amp}}&=1024R_{\mathrm{rect}}^3,\\
 K_s&=4s_{\mathrm{amp}}\max(H_{\mathrm{col}},V_{\mathrm{col}}),\\
 P_s&=K_s+4M_W+64,\qquad b_\star=P_s-8.
\end{split}
\]
For \(a,b\in\bits\), a horizontal colour \(\chi\), and
\(0\leq\ell<s_{\mathrm{amp}}\), put
\[
 A_H(a,b,\chi,\ell)
 =8+4(s_{\mathrm{amp}}\iota_H(\chi)+\ell)+2a+b,
\]
and define \(A_V\) analogously using \(\iota_V\).  Every code lies in
\(\{8,\ldots,K_s+7\}\), strictly below \(b_\star\), and the four arguments are recovered from it.

Let
\[
 \Lplug=\{(0,1),(1,0),(1,1)\},\qquad
 q_u=(16+4\jmath(u),16),\qquad
 \Cav_u=q_u+\Lplug.
\]
For an opening vector
\(\eta=(\eta_W,\eta_E,\eta_S,\eta_N)\in\bits^4\), let
\(\ell_D=0\) whenever \(\eta_D=0\).  Starting with
\(B_s=\{0,\ldots,P_s-1\}^2\), delete \(\Cav_u\).  On each open side also perform the following dent-and-bump operation:
\[
\begin{array}{c|c|c}
\text{side}&\text{dent}&\text{bump}\\ \hline
W&(0,A_H(\eta_S,\eta_N,W(u),\ell_W))&(-1,b_\star)\\
E&(P_s-1,b_\star)&(P_s,A_H(\eta_S,\eta_N,E(u),\ell_E))\\
S&(A_V(\eta_W,\eta_E,S(u),\ell_S),0)&(b_\star,-1)\\
N&(b_\star,P_s-1)&(A_V(\eta_W,\eta_E,N(u),\ell_N),P_s).
\end{array}
\]
Denote the resulting support by \(\Car(\eta,u,\ell)\), and let
\(\Tstar\) consist of \(\Lplug\) and every such carrier.  Figure~\ref{fig:carrier-interface} shows the tile-specific cavity and the reciprocal interface registers.

\begin{figure}[!htbp]
  \centering
  \includegraphics[width=.89\linewidth]{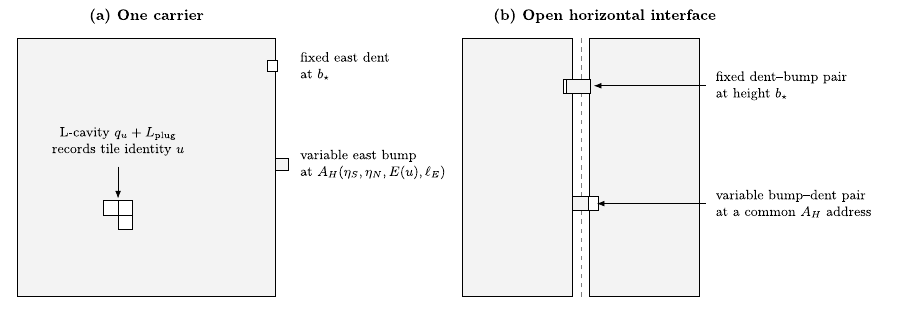}
  \caption{A connected carrier and the geometry of a matching interface. The interior cavity records the identity of the simulated Wang tile. A fixed and a variable dent--bump pair encode the state of an open interface. Matching addresses force agreement of the adjacent local data.}
  \label{fig:carrier-interface}
\end{figure}

\begin{lem}[Local geometric rigidity]\label{lem:carrier-rigidity}
Every carrier \(\Car(\eta,u,\ell)\) satisfies the following two properties.
\begin{enumerate}
\item In any relative row or column of the body square \(B_s\), no three consecutive body cells are all absent from the carrier.
\item Relative rows and columns \(1\) and \(2\), and each three-by-three corner of \(B_s\), are intact.
\end{enumerate}
\end{lem}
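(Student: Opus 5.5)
The plan is a direct coordinate check. Inside the body square \(B_s\), the only cells absent from \(\Car(\eta,u,\ell)\) are the three cavity cells \(\Cav_u=q_u+\Lplug\) and at most four dent cells, one per open side. The bumps lie outside \(B_s\) and play no role. So I will list every absent body cell explicitly, by its relative row and column, and then verify both assertions row by row and column by column. Closed sides only remove dents, so it suffices to treat the all-open case \(\eta=(1,1,1,1)\).

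First I list the absent cells. Writing \(c=16+4\jmath(u)\), the cavity consists of \((c,17)\), \((c+1,16)\) and \((c+1,17)\). The dents are
\[
(0,A_W),\quad (P_s-1,b_\star),\quad (A_S,0),\quad (b_\star,P_s-1),
\]
with \(A_W,A_S\in\{8,\ldots,K_s+7\}\). The code range and \(b_\star=P_s-8\) come from the definitions above the statement. The cavity columns satisfy \(c,c+1\le 4M_W+13\). Since \(P_s=K_s+4M_W+64\), all cavity coordinates, \(A_W\), \(A_S\) and \(b_\star\) lie in the interval \(\{8,\ldots,P_s-8\}\), and the cavity coordinates even lie in \(\{16,\ldots,P_s-8\}\).

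For property 1, I examine each row.
\begin{enumerate}
\item Row \(17\) contains the cavity run \(\{c,c+1\}\) of length two.
\item Row \(16\) contains the single cavity cell \(c+1\).
\item Rows \(0\) and \(P_s-1\) contain one dent cell each.
\item Rows \(A_W\) and \(b_\star\) contain a dent at column \(0\) or \(P_s-1\).
\end{enumerate}
Such a row may coincide with row \(16\) or \(17\). However, the dent column is at distance at least \(16\) from any cavity column, so the dent cannot join a cavity run. The two dents never share a row, because \(A_W\le K_s+7<b_\star\). Hence every row has absent runs of length at most two. The columns are handled by the same argument with coordinates swapped: column \(c+1\) has the run \(\{16,17\}\), column \(c\) has a single absent cell, and any dent in column \(A_S\), \(b_\star\), \(0\) or \(P_s-1\) sits at row \(0\), \(P_s-1\), \(b_\star\) or \(A_W\). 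Each of these is separated by a gap from rows \(16\) and \(17\).

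For property 2, I observe that every absent cell has at least one coordinate in \(\{0,P_s-1\}\), or else both coordinates are at least \(16\). Rows and columns \(1\) and \(2\) are therefore hit only if some absent cell has a coordinate equal to \(1\) or \(2\). For the dents, the other coordinate is \(A_W\), \(A_S\) or \(b_\star\), all of which are at least \(8\); for the cavity, both coordinates are at least \(16\). So these rows and columns are intact. For the four \(3\times3\) corners, a dent would need its non-boundary coordinate in \(\{0,1,2\}\cup\{P_s-3,P_s-2,P_s-1\}\). The range \(\{8,\ldots,P_s-8\}\) excludes this, and the cavity lies strictly inside.

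The only step needing care, and the one I expect to be the main obstacle, is confirming the numerical inequalities \(K_s+7<b_\star\) and \(4M_W+17\le P_s-8\). Both follow at once from \(P_s=K_s+4M_W+64\).
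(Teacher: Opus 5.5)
Your proposal is correct and is essentially the paper's proof. Both use the same direct coordinate check: the cavity lies in the strip $\{16,\ldots,4M_W+13\}\times\{16,17\}$, each dent is a single cell on an extreme row or column with its other coordinate in $\{8,\ldots,K_s+7\}\cup\{b_\star\}$, and the resulting separations give the claim; you write out case by case what the paper states tersely.

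There are two harmless imprecisions. First, in the column case the W-dent row $A_W$ may equal $16$ or $17$, so it is not separated from those rows; the real reason is that column $0$ contains no cavity cell, and only the south dent, at row $0$, can share a column with the cavity. Second, the dent addresses depend on $\eta$ through the perpendicular bits, so your reduction to the all-open case is valid only because you let $A_W,A_S$ range over the whole code interval.
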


\begin{proof}
The interior cavity is contained in a two-by-two box and removes at most two consecutive cells from any row or column.  Every side dent removes a single boundary cell.  On any fixed body row, boundary dents can occur only at the two extreme columns; on any fixed body column, they can occur only at the two extreme rows.  The variable dent coordinates lie in \(\{8,\ldots,K_s+7\}\), the fixed coordinate is \(b_\star=P_s-8\), and the cavity lies in the strip \(\{16,\ldots,4M_W+13\}\times\{16,17\}\).  If a boundary dent shares a row or column with the cavity, at least fifteen occupied body cells separate them.  Hence neither the cavity nor any combination of boundary dents creates three consecutive missing body cells.  The same coordinate separations place every cavity and dent away from relative rows and columns \(1,2\) and away from all three-by-three body corners.
\end{proof}

For \(n\geq1\), let \(\Part_n\) denote the set of rectangular partitions of the index square \(\{0,\ldots,n-1\}^2\), including the one-block partition.  For \(D\in\Part_n\), let \(I(D)\) be its number of seam adjacencies, and for a block \(B\in D\) let \(w_B,h_B\) denote its width and height.

\begin{thm}[Geometric compilation and counting identity]\label{thm:geometry}
The fixed translation-only family \(\Tstar\) has the following properties.
\begin{enumerate}
\item Every carrier is 4-connected, has area \(P_s^2-3\), lies in
\(\{-1,\ldots,P_s\}^2\), and the map
\((\eta,u,\ell)\mapsto\Car(\eta,u,\ell)\) is injective on unlabelled supports.  The plug is 4-connected of area \(3\).
\item For every \(n\geq1\), every \(\Tstar\)-tiling of \(\Square_{P_sn}\) has exactly one carrier anchored at each \((iP_s,jP_s)\), \(0\leq i,j<n\).
\item Adjacent anchored carriers are disjoint exactly when their reciprocal opening bits agree and, whenever the common opening bit is one, the corresponding perpendicular opening bits, Wang colours, and amplifier labels also agree.  Diagonal carriers are disjoint, and outer opening bits vanish.
\item For every \(n\geq1\), after the anchored carriers are placed, the residual set is the disjoint union of the \(n^2\) translated cavities.  Each cavity has one plug placement, and no additional carrier fits.
\item For every \(n\geq1\), tilings of \(\Square_{P_sn}\) are in bijection with a partition \(D\in\Part_n\), an open \(W\)-tiling of every block, and one label in
\(\{0,\ldots,s_{\mathrm{amp}}-1\}\) on every open macro-edge.
\item Let \(E_n=2n(n-1)\). Then
\[
 g_{\Tstar}(P_sn)
 =\sum_{D\in\Part_n}
 s_{\mathrm{amp}}^{E_n-I(D)}
 \prod_{B\in D}a_{\mathrm{rect}}(w_B,h_B).
\]
For \(n=1\) this equals \(a_{\mathrm{rect}}(1,1)\).  For \(n\geq2\), the nontrivial-partition contribution \(J(n)\) satisfies
\[
 0\leq\frac{J(n)}{s_{\mathrm{amp}}^{E_n}}
 \leq\frac43\,4^{-n}<1,
\]
and consequently
\[
 \left\lfloor
 \frac{g_{\Tstar}(P_sn)}
      {s_{\mathrm{amp}}^{E_n}}
 \right\rfloor
 =a_{\mathrm{rect}}(n,n).
\]

\end{enumerate}
\end{thm}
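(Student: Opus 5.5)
I would prove the six items in order; the first four are geometric and the last two are counting.

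\emph{Item 1.} This is bookkeeping from the definition of \(\Car(\eta,u,\ell)\). Each dent removes one cell and each bump adds one cell, so area is preserved; only the cavity removes three cells, giving area \(P_s^2-3\). Bumps sit at coordinates \(-1\) or \(P_s\), which gives containment in \(\{-1,\ldots,P_s\}^2\). For 4-connectivity I would use Lemma~\ref{lem:carrier-rigidity}: rows and columns \(1,2\) are intact and every other cell reaches them through a body line with no gap of length three. Each bump is attached to a body cell not removed by a dent. Injectivity follows because the cavity position \(q_u\) recovers \(u\). The presence of a dent or bump on each side recovers \(\eta\), and the variable dent or bump coordinate recovers \((a,b,\chi,\ell)\) through the injective code \(A_H\) or \(A_V\).

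\emph{Item 2.} This rigidity step is the main obstacle. I would argue by a sweep in lexicographic order of the lowest-leftmost uncovered cell. That cell \(c\) must be covered by a placement whose own lowest-leftmost cell is \(c\). A plug's lowest-leftmost cell is \((0,1)\) relative to its anchor, and its offset \((1,0)\) then occupies a cell lexicographically earlier than \(c\) in the same row order. Checking this against the current frontier shows a plug cannot be chosen there, so a carrier must be. The intact three-by-three corners, together with the fact that a bump at \(-1\) or \(P_s\) never lies in a corner row or column, then force the carrier's corner onto \(c\). Induction shows that the frontier is always a union of anchored \(P_s\)-blocks plus bounded dent and bump defects, so anchors land on \(P_s\mathbb{Z}^2\). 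The delicate points are that a bump protruding from an earlier carrier fills precisely a dent of a later one, and that outer bumps cannot leave the square. The latter forces outer opening bits to vanish, which I would also record for item 3.

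\emph{Items 3 and 4.} For adjacent anchored carriers sharing, say, a vertical boundary, the west carrier's east bump lands at \((P_s,A_H(\cdot))\), which must be the east carrier's west dent. The fixed pair \(b_\star\) does the same in the other direction. Hence disjointness together with full coverage of that boundary column holds iff the opening bits agree and the codes coincide. Since \(A_H\) is injective, equal codes mean the perpendicular bits, the colour and \(\ell\) all agree. If the bit is \(0\), there is neither a bump nor a dent on that side. Diagonal carriers are disjoint because corners carry no bumps. The residual set is then the union of the cavities \(\Cav_u+\)anchor. By Lemma~\ref{lem:carrier-rigidity} these are isolated L-triominoes, each filled only by the plug: no carrier of area \(P_s^2-3\) fits in three cells.

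\emph{Item 5.} The opening bits define \(G_{\mathrm{mac}}\). Item 3 says exactly that in each 2-by-2 window the perpendicular bits stored at an open edge force the adjacent edges, i.e.\ saturation. Theorem~\ref{thm:saturation} then yields a partition \(D\in\Part_n\) into filled blocks. Inside a block, colour agreement across open edges is an open \(W\)-tiling, and each open macro-edge carries a free agreed label \(\ell\). Across seams nothing is constrained and outer bits are zero, so the Wang tiling is open on each block. The converse construction is immediate.

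\emph{Item 6.} The identity is the sum over the bijection of item 5; the number of open edges is \(E_n-I(D)\). For the bound I would write \(J(n)/s_{\mathrm{amp}}^{E_n}\le\sum_{I\ge n}256^I s_{\mathrm{amp}}^{-I}\max_D\prod_B R_{\mathrm{rect}}^{w_B+h_B}\), using Theorems~\ref{thm:dissections} and~\ref{thm:rectangle-bound}. Summing perimeters, \(\sum_B(w_B+h_B)\) is half the total block perimeter, i.e.\ \(2n+I\). Since \(I\ge n\), this gives \(R_{\mathrm{rect}}^{2n+I}\le R_{\mathrm{rect}}^{3I}\). Each term is then at most \((256R_{\mathrm{rect}}^3/s_{\mathrm{amp}})^I=4^{-I}\), and the geometric series over \(I\ge n\) gives \(\tfrac43 4^{-n}\). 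The floor identity follows because the one-block term is exactly \(s_{\mathrm{amp}}^{E_n}a_{\mathrm{rect}}(n,n)\).
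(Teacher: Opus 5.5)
Your items 3--6 follow the paper's argument closely: reciprocal dent--bump codes with injective $A_H,A_V$, bump-free corners for diagonal neighbours, saturation from the perpendicular bits stored in the codes, and the series $\sum_{I\ge n}(256R_{\mathrm{rect}}^3/s_{\mathrm{amp}})^I$ via $2n+I\le 3I$. Item 2, which you rightly call the main obstacle, has a genuine gap, because the sweep you describe does not work as stated.

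\emph{The plug exclusion is invalid.} Since $c$ is by construction the minimal cell of the placement covering it, no cell of that placement can precede $c$. So the plug cell ``lexicographically earlier than $c$'' does not exist. The plug's minimal cell is $(1,0)$ in row-major order and $(0,1)$ only in column-major order, so two different orders are being mixed.

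\emph{The conclusion ``a plug cannot be chosen there, so a carrier must be'' is false for a cell-by-cell sweep.} When the minimal uncovered cell is $q_u+(1,0)$ inside an anchored carrier, the plug anchored at $q_u$ is forced. When it is an open north dent $c=(iP_s+b_\star,jP_s-1)$, the plug anchored at $c-(1,0)$ uses only that dent and two still-free cells of row $jP_s$. It overlaps nothing yet, and is refuted only later, when the placement covering $(iP_s,jP_s)$ is examined.

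\emph{``Force the carrier's corner onto $c$'' does not yield alignment.} In row-major order a carrier with open south side has its minimal cell at its south bump $(b_\star,-1)$. It is therefore first met at the north dent of the carrier below, not at a macro-corner (symmetrically for west bumps in column-major order). The invariant that anchors lie on $P_s\Z^2$ is asserted rather than derived, and misaligned carrier placements are never actually excluded.

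The paper avoids all of this by inducting only over the macro-corners $z=(iP_s,jP_s)$ in row-major order, with all earlier macro-origins already anchored, and classifying the placement that covers $z$. A plug covering $z$ has one of three origins. Each either leaves the square or hits the intact cell $(1,P_s-1)$ of the lower carrier, the intact cell $(P_s-1,1)$ of the left carrier, or an intact corner. A carrier covering $z$ has origin $(a,b)$ with $iP_s-P_s\le a\le iP_s+1$ and $jP_s-P_s\le b\le jP_s+1$. Each alternative to $(a,b)=z$ gets an explicit overlap certificate from Lemma~\ref{lem:carrier-rigidity}:
\begin{itemize}
\item when $b\le jP_s-2$ (resp.\ $b=jP_s$, $a\le iP_s-2$), the intact relative row $1$ (resp.\ column $1$) meets the no-three-gap property of the anchored neighbour;
\item when $b=jP_s-1$, three intact bottom cells meet the top row below, which misses at most one north dent;
\item when $b=jP_s+1$ (the south-bump case), the carrier leaves the square for $i=0$ and otherwise overlaps the left carrier at $((i-1)P_s+9,jP_s+2)$;
\item when $a=iP_s-1$, the corners collide;
\item when $a=iP_s+1$, covering $z$ would need a relative cell $(-1,0)$, which does not exist.
\end{itemize}
You need this case analysis for item 2, or else a sweep that explicitly handles cavity cells, dent cells with delayed refutation, and carriers whose minimal cell is a bump.

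A smaller point in item 1: the no-three-gap property is not a connectivity statement, since a single missing cell already cuts a line. The paper instead deletes the whole $2\times2$ box containing the cavity from $\{1,\ldots,P_s-2\}^2$, reinserts its southwest cell, and attaches boundary cells and bumps inward.
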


\begin{proof}
The coding maps \(A_H,A_V\) are injective because their residues modulo four recover the two perpendicular bits, while quotient and remainder by
\(s_{\mathrm{amp}}\) recover the colour and label.  The cavity lies in the interior strip \(\{16,\ldots,4M_W+13\}\times\{16,17\}\), far from every edge feature.  Each open side deletes one cell and adds one cell, so a carrier has area \(P_s^2-3\).

First delete the entire two-by-two box containing the cavity from the inner rectangle
\(\{1,\ldots,P_s-2\}^2\); the remainder is 4-connected.  Reinsert the southwest cell of that box, which is the unique cell of the box not removed by the L-shaped cavity.  It is adjacent to the surrounding remainder and therefore preserves 4-connectivity.  Every undeleted boundary cell attaches inward.  Each bump attaches to an undeleted boundary cell because all variable code positions lie below \(b_\star\).  Hence every carrier is 4-connected.

The cell \((0,0)\) uniquely minimizes \(x+y\) in every carrier.  Equality up to translation therefore forces equal origins.  The presence or absence of the four side features recovers \(\eta\), the interior cavity recovers \(u\), and on every open side the variable address---carried by the dent on the west and south sides and by the bump on the east and north sides---recovers its label.  This proves injectivity.  It also proves the identity test for equal-colour Wang tiles.

We establish anchoring by row-major induction, increasing \(j\) first and, within a fixed macro-row \(j\), increasing \(i\).  The repeated rigidity input is Lemma~\ref{lem:carrier-rigidity}: an anchored carrier cannot expose three consecutive body cells in one row or column, and the first two body rows and columns and all three-by-three body corners are intact.

Let \(z=(iP_s,jP_s)\) be the next macro-corner.  A translate of \(\Lplug\) covering \(z\) has origin \((iP_s,jP_s-1)\), \((iP_s-1,jP_s)\), or \((iP_s-1,jP_s-1)\). At the outer boundary one of its cells leaves the square.  Otherwise it overlaps the intact cell \((1,P_s-1)\) of the lower carrier, the intact cell
\((P_s-1,1)\) of the left carrier, or an intact corner cell.  Hence the placement covering \(z\) is a carrier.

Write its origin as \((a,b)\).  The intact body corners and containment in \(\Square_{P_sn}\) imply \(0\leq a,b\leq P_sn-P_s\). Because the carrier covers \(z\) and its relative support is contained in \(\{-1,\ldots,P_s\}^2\), its origin satisfies
\[
iP_s-P_s\leq a\leq iP_s+1,\qquad jP_s-P_s\leq b\leq jP_s+1.
\]
If \(b\leq jP_s-2\), then \(j\geq1\) and the new carrier's intact relative row \(1\) lies inside the already anchored macro-row \(j-1\).  By the horizontal-origin bounds, its length-\(P_s\) body interval meets at least one anchored macrocell of that row in three consecutive positions.  Those three positions are occupied by the new carrier, while the no-three-gap property prevents the anchored carrier from omitting all three; hence the two placements overlap.  If \(b=jP_s-1\), again \(j\geq1\).  The intact cells with relative coordinates \((0,0),(1,0),(2,0)\) of the new carrier lie on the top body row of the carrier directly below.  On that top row the only possible boundary omission is the single north dent, and the corner cells are intact, so at least one of these three cells is already occupied.  Both cases are impossible.

If \(b=jP_s+1\), the corner \(z\) must be the south bump, so \(a=iP_s-b_\star=(i-1)P_s+8\). For \(i=0\), the body leaves the square.  For \(i\geq1\), its body overlaps the left carrier throughout
\[
\{(i-1)P_s+8,\ldots,iP_s-1\}\times\{jP_s+1,\ldots,jP_s+P_s-1\}.
\]
In particular, \(((i-1)P_s+9,jP_s+2)\) belongs to both carriers: it is outside both cavities and all possible dents.  Thus \(b=jP_s\).  With \(b=jP_s\) fixed, if \(a\leq iP_s-2\) then necessarily \(i\geq1\), and the new carrier's intact relative column \(1\) meets the already anchored carrier immediately to the left in at least three consecutive vertical body positions.  The no-three-gap property forces an overlap.  If \(a=iP_s-1\), the new carrier's intact southwest corner overlaps the left carrier's intact southeast corner.  Finally, if \(a=iP_s+1\), covering \(z\) would require a relative cell with first coordinate \(-1\) at height zero, whereas the only such cell is the west bump at height \(b_\star\).  Hence \(a=iP_s\), and therefore
\((a,b)=(iP_s,jP_s)\).  Induction anchors exactly one carrier at every macro-origin. Table~\ref{tab:anchoring-cases} records the exhaustive alternatives and their overlap certificates.

\begin{table}[!htbp]
\centering
\small
\renewcommand{\arraystretch}{1.12}
\caption{Exhaustive alternatives at the next row-major macro-corner $z=(iP_s,jP_s)$. Here $(a,b)$ denotes the origin of a carrier that covers $z$. Each rejected placement either crosses the physical boundary or overlaps a previously anchored carrier.}
\label{tab:anchoring-cases}
\begin{tabularx}{\linewidth}{@{}>{\raggedright\arraybackslash}p{.285\linewidth}X@{}}
\toprule
Placement covering $z$ & Exclusion or surviving case \\
\midrule
Plug at $z+(0,-1)$ & Crosses the lower boundary or overlaps the intact lower carrier. \\
Plug at $z+(-1,0)$ & Crosses the left boundary or overlaps the intact left carrier. \\
Plug at $z+(-1,-1)$ & Crosses the boundary or overlaps an intact previously occupied corner. \\
Carrier with $b\le jP_s-2$ & Its intact body row~$1$ forces three consecutive occupied cells in the preceding macro-row, contradicting the no-three-gap property. \\
Carrier with $b=jP_s-1$ & Three intact south body cells meet the occupied top row below, where at most one north dent is absent. \\
Carrier with $b=jP_s+1$ & This requires the south bump at $z$. It crosses the boundary if $i=0$; otherwise it overlaps at $((i-1)P_s+9,jP_s+2)$. \\
Carrier with $b=jP_s$, $a\le iP_s-2$ & Its intact body column~$1$ overlaps the anchored macro-column to the left in three consecutive cells. \\
Carrier with $b=jP_s$, $a=iP_s-1$ & The new southwest corner overlaps the left carrier's intact southeast corner. \\
Carrier with $b=jP_s$, $a=iP_s+1$ & Covering $z$ would require a cell at relative coordinate $(-1,0)$, whereas the only west bump is at height $b_\star$. \\
Carrier at $(a,b)=z$ & The unique surviving anchored position. \\
\bottomrule
\end{tabularx}
\end{table}

Consider two horizontal neighbours.  If the left east bit and right west bit differ, the existing bump overlaps an undeleted body cell of the neighbour.  Hence they agree.  If both are open, the left east bump can avoid overlap only by occupying the right west dent, so
\[
A_H(\eta_S,\eta_N,E(u),\ell_E)=A_H(\eta'_S,\eta'_N,W(u'),\ell'_W).
\]
Injectivity yields equality of the perpendicular bits, Wang colours, and labels.  The fixed bump at height \(b_\star\) reciprocally fills the left east dent.  If both sides are closed, the adjacent body columns are disjoint.  The vertical argument is identical with \(A_V\).

An open outer side would place a bump outside the square, so outer bits vanish.  Diagonal support boxes meet in four cells around a macro-corner.  The lower-left carrier occupies only its northeast body corner among those four; the upper-right carrier occupies only its southwest body corner.  The other two cells cannot be bumps because variable coordinates are at most \(K_s+7<P_s-8\), while the fixed bump coordinate is \(b_\star\neq0\).  Thus diagonal carriers are disjoint.  Boxes separated by two macro-pitches are disjoint.

All exchanged dents are filled by reciprocal bumps, and closed interfaces leave no holes.  The uncovered cells are therefore precisely the translated cavities.  They are mutually separated, 4-connected, and have area three.  Any remaining placement lies in one residual component.  A carrier is too large, while a plug fitting a three-cell component must equal it.  The translation is uniquely the cavity origin.

Set \(P_{\mathrm{mac}}=P_s\) and \(c_{\mathrm{col}}=0\).  A macro-edge is open exactly when the reciprocal carrier bits on that interface are one; plugs lie strictly inside one macrocell.  We verify saturation on one orientation, the others following by symmetry.  Suppose the bottom horizontal edge and the left vertical edge of a \(2\)-by-\(2\) macro-square are open.  The open bottom edge equates the north bits of its two incident carriers.  Since the bottom-left north bit is one (the left vertical interface is open), the bottom-right north bit is also one; reciprocal-bit equality then makes the right vertical interface open.  That open right interface equates the west bits of its two incident carriers.  The bottom-right west bit is one (the bottom interface is open), so the top-right west bit is one; reciprocal-bit equality then makes the top horizontal interface open.  Thus two consecutive open edges force all four.

Theorem~\ref{thm:saturation} makes the opening components filled rectangles.  The carrier identities on an open edge give Wang-colour matching inside each component, and the common label supplies one element of
\(\{0,\ldots,s_{\mathrm{amp}}-1\}\).  Conversely, a rectangular partition, an open Wang tiling on every block, and edge labels determine all carrier records.  The interface rules give disjointness, the diagonal check handles corner contacts, and the unique plugs fill the cavities.  These maps are inverse: a physical tiling determines the partition, the Wang tile identities and the interface labels, and conversely these data prescribe all carrier placements and the unique cavity plugs.  An all-closed singleton still has one distinct carrier for every \(u\in W\), because \(u\) is recorded by its cavity.

There are \(E_n=2n(n-1)\) index-grid edges.  A partition with \(I(D)\) seam edges has \(E_n-I(D)\) open edges, proving the exact sum.

For every rectangular partition, \(\sum_{B\in D}(w_B+h_B)=2n+I(D)\), because block perimeters count the outer boundary once and each seam twice.  Figure~\ref{fig:dissection-suppression} contrasts the canonical one-block term with a representative nontrivial rectangular partition and identifies the seam parameter \(I(D)\) that drives the suppression estimate below.

\begin{figure}[!htbp]
  \centering
  \includegraphics[width=.84\linewidth]{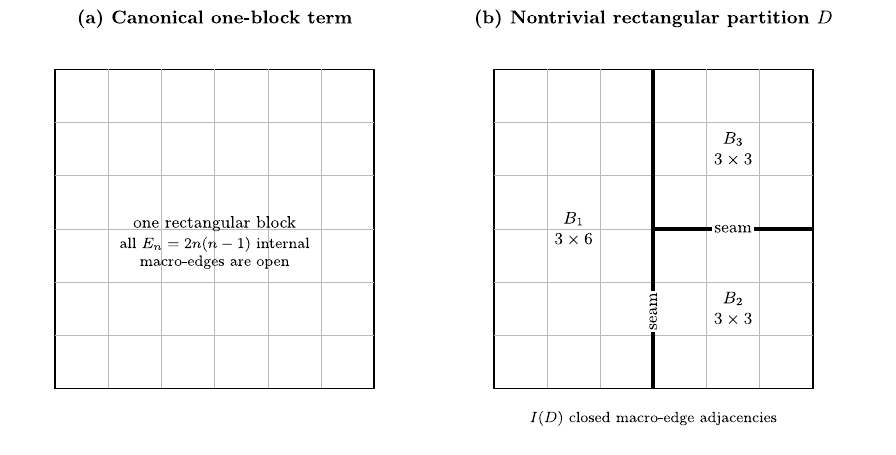}
  \caption{One-block and nontrivial rectangular decompositions of the macrogrid. Each closed adjacency on an internal seam removes one independent interface-label choice. The canonical one-block term therefore has the maximum interface multiplicity, while other decompositions are bounded by their seam length.}
  \label{fig:dissection-suppression}
\end{figure}

The bijection in Theorem~\ref{thm:geometry} assigns every geometric tiling to exactly one partition \(D\in\Part_n\).  The unique one-block partition is the canonical term; every other tiling appears in exactly one nontrivial partition.  For a fixed \(D\), the exact sum above already includes all open Wang tilings inside every block and all amplifier-label choices on every open macro-edge.  Grouping the remaining partitions by their seam count \(I(D)\) therefore accounts for every noncanonical tiling exactly once.  The two estimates below then bound, respectively, how many such partitions have a given \(I\) and how large the complete internal Wang contribution of each partition can be.

For a nontrivial partition, Theorem~\ref{thm:dissections} gives $I(D)\ge n$ and at most $256^I$ dissections with $I$ seams. By Theorem~\ref{thm:rectangle-bound} and the block-perimeter identity,
\[
\prod_{B\in D}a_{\mathrm{rect}}(w_B,h_B)\le R_{\mathrm{rect}}^{2n+I(D)}.
\]
Writing $J(n)$ for the sum of the nontrivial-partition terms, the exhaustive partition bijection now yields
\[
\begin{aligned}
\frac{J(n)}{s_{\mathrm{amp}}^{E_n}}
&\le \sum_{I=n}^{E_n}256^I s_{\mathrm{amp}}^{-I}R_{\mathrm{rect}}^{2n+I}\\
&\le \sum_{I=n}^{\infty}\left(\frac{256R_{\mathrm{rect}}^3}{s_{\mathrm{amp}}}\right)^I
 =\frac43\,4^{-n}<1,
\end{aligned}
\]
where the second inequality uses $2n+I\le 3I$ and $s_{\mathrm{amp}}=1024R_{\mathrm{rect}}^3$. This sum includes every noncanonical tiling through its unique rectangular partition.
For \(n\geq2\) this is at most \(1/12\).  The one-block term is
\(s_{\mathrm{amp}}^{E_n}a_{\mathrm{rect}}(n,n)\), so the normalized error is strictly below one and the floor identity follows.  For \(n=1\), there is only the singleton partition and \(E_1=0\).
\end{proof}

In particular, the identity cavity distinguishes Wang tile types even when their four edge colours coincide. For consecutive register values, the cavity positions are respectively
\[
 \{(16,17),(17,16),(17,17)\},\qquad
 \{(20,17),(21,16),(21,17)\}.
\]

\section{Recovering the Wang Count}\label{sec:first-readout}

The boundary-amplified Wang system separates the desired computation histories from the other open-boundary configurations by counting multiplicity. A bound on the remaining configurations makes the accepting-path count recoverable by integer division. The section also verifies the bit complexity of this readout, establishing metric completeness for the fixed Wang counter.

Define the total Wang-counting function by \(G_{\mathrm{prod}}(\binenc(n))=a_{\mathrm{prod}}(n)\) for \(n\geq1\), and set it to zero on all other strings.

\begin{thm}[Metric completeness of the Wang count]\label{thm:wang-completeness}
The function \(G_{\mathrm{prod}}\) has the following properties.
\begin{enumerate}
\item \(G_{\mathrm{prod}}\in\#\EXP\).
\item For \(n=E_{\mathrm{side}}(x)\),
\[
 a_{\mathrm{prod}}(n)
 =Q^{4n-4}(1+C_U(x))+\Junk(n).
\]
\item Let \(n_1\) be the least integer at least two satisfying
\(2^{n_1-1}>15Q^2\), and let \(n_{\mathrm{thr}}=\max(5,n_1)\).
Then
\[
 0\leq\Junk(n)<Q^{4n-4}\qquad(n\geq n_{\mathrm{thr}}).
\]
\item The exceptional set \(X_{\mathrm{exc}}=\{x:E_{\mathrm{side}}(x)<n_{\mathrm{thr}}\}\) is finite.  There is a polynomial-time finite-table function
\(H_{\mathrm{exc}}\) agreeing with \(C_U\) on this set.
\item The instance map \(r_{\mathrm{met}}(x)=\binenc(E_{\mathrm{side}}(x))\) and the postprocessor
\[
 h_\star(x,y)=
 \begin{cases}
 H_{\mathrm{exc}}(x),&E_{\mathrm{side}}(x)<n_{\mathrm{thr}},\\[2mm]
 \max\!\left(0,
 \left\lfloor\dfrac{y}
 {Q^{4E_{\mathrm{side}}(x)-4}}\right\rfloor-1\right),
 &E_{\mathrm{side}}(x)\geq n_{\mathrm{thr}},
 \end{cases}
\]
form a polynomial-time one-query metric reduction from \(C_U\) to
\(G_{\mathrm{prod}}\).  The exponentiation is stopped once its value exceeds \(y\).  On exceptional inputs the postprocessor ignores \(y\).
\item \(G_{\mathrm{prod}}\) is \(\#\EXP\)-complete under polynomial-time metric reductions.  
\end{enumerate}
\end{thm}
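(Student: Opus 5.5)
I will prove the six items in order, mostly by assembling Theorems~\ref{thm:source-machine}, \ref{thm:wang-product}, and~\ref{thm:rectangle-bound}.

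\emph{Item 1.} I would follow the proof of Proposition~\ref{prop:fixed-family-sharpE}. A witness for \(a_{\mathrm{prod}}(n)\) is the row-major list of the \(n^2\) tile types of an open \(W_{\mathrm{prod}}\)-tiling. Since \(W_{\mathrm{prod}}\) is fixed, it has \(O(n^2)\) symbols, and internal edge matching is checked in time polynomial in \(n\le 2^{|\binenc(n)|}\). Malformed strings are rejected, so \(G_{\mathrm{prod}}\in\#\mathrm{E}\subseteq\#\EXP\).

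\emph{Item 2.} Since \(E_{\mathrm{side}}(x)\ge 2^{74}\ge5\), Theorem~\ref{thm:wang-product} gives \(a_{\mathrm{prod}}(n)=Q^{4n-4}\operatorname{Acc}_\star(n)+\Junk(n)\). Theorem~\ref{thm:source-machine} gives \(\operatorname{Acc}_\star(n)=1+C_U(x)\) for \(n=E_{\mathrm{side}}(x)\). Substituting proves the identity.

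\emph{Item 3.} This is the one genuine estimate. From \(\Junk(n)\le 1+15R_0^nQ^{3n-2}\), it suffices to show
\[
1+15R_0^nQ^{3n-2}<Q^{4n-4},
\qquad\text{that is,}\qquad
Q^{-(4n-4)}+15Q^2\,(R_0/Q)^n<1 .
\]
Since \(Q=2R_0+1\), we have \(R_0/Q<1/2\), so the second term is below \(15Q^2\,2^{-n}\). The hypothesis \(2^{n_1-1}>15Q^2\) and \(n\ge n_1\) give \(15Q^2\,2^{-n}<1/2\). For \(n\ge2\), \(Q^{-(4n-4)}\le Q^{-4}<1/2\), because \(Q\ge3\). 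Adding the two bounds gives the strict inequality.

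\emph{Item 4.} Finiteness follows from injectivity of \(E_{\mathrm{side}}\), since only finitely many strings map below the constant \(n_{\mathrm{thr}}\). In fact \(E_{\mathrm{side}}(x)\ge2^{74}\), so the set is probably empty once \(n_{\mathrm{thr}}\le 2^{74}\). In any case a finite lookup table is a polynomial-time function.

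\emph{Item 5.} For a non-exceptional \(x\), set \(y=G_{\mathrm{prod}}(r_{\mathrm{met}}(x))\). Items 2 and 3 give \(\lfloor y/Q^{4n-4}\rfloor=1+C_U(x)\), so \(h_\star\) returns \(C_U(x)\); the \(\max(0,\cdot)\) is never active on true values.

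The complexity claim is the main obstacle. The divisor \(Q^{4n-4}\) has about \(4n\log Q\) bits, which is exponential in \(|x|\), so it cannot be computed outright. Instead, \(h_\star\) compares \(\binenc(y)\) with the divisor by bit length. Powers of \(Q\) are computed only while they stay at most \(y\), which costs time polynomial in \(|\binenc(y)|\) by the stopping rule. If the power exceeds \(y\), the quotient is \(0\) and the output is \(0\). Otherwise the divisor has at most \(|\binenc(y)|\) bits, and long division is polynomial in \(|x|+|\binenc(y)|\), as Krentel's model allows. The map \(r_{\mathrm{met}}\) just writes the polynomial-length string \(\beta(x)\).

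\emph{Item 6.} Membership is item 1. For hardness, let \(f\in\#\EXP\). Compose the parsimonious reduction \(r_f\) from Theorem~\ref{thm:source-machine} with the metric reduction of item 5, giving the pair \((r_{\mathrm{met}}\circ r_f,\ h_\star(r_f(z),\cdot))\). Both components run in polynomial time, so \(G_{\mathrm{prod}}\) is \(\#\EXP\)-hard under polynomial-time metric reductions.
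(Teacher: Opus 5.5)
Your proposal is correct and follows the paper's proof essentially step for step: you substitute Theorems~\ref{thm:source-machine} and~\ref{thm:wang-product}, split $\Junk(n)/Q^{4n-4}<Q^{-(4n-4)}+15Q^2\,2^{-n}$ with each term below $1/2$, hard-code a table for $X_{\mathrm{exc}}$, use a stopped exponentiation in $h_\star$, and compose with the parsimonious maps $r_f$. The differences are only cosmetic: you get finiteness of $X_{\mathrm{exc}}$ from injectivity of $E_{\mathrm{side}}$ rather than from $E_{\mathrm{side}}(x)\ge 2^{74|x|+74}$, and you bound the divisor computation by iterating powers of $Q$ until they exceed $y$ rather than by the paper's test $A s_Q\ge \ell_y+1$ followed by repeated squaring; both routes are valid, though as a bookkeeping point your item~1 verifier should also explicitly reject the input $\binenc(0)$, on which $G_{\mathrm{prod}}$ is defined to be $0$.
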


\begin{proof}
An open \(W_{\mathrm{prod}}\)-tiling of an \(n\)-square has a unique encoding by
\(O(n^2)\) fixed-length tile labels.  A deterministic verifier checks all horizontal and vertical adjacencies in \(O(n^2)\) time.  On a binary input of length \(\ell\), \(n<2^\ell\), so guessing the labels and checking them takes
\(2^{O(\ell)}\) time.  Malformed strings and zero are rejected deterministically.  This proves membership.

For \(n=E_{\mathrm{side}}(x)\), Theorem~\ref{thm:source-machine} gives
\(\operatorname{Acc}_\star(n)=1+C_U(x)\), and
Theorem~\ref{thm:wang-product} gives
\[
 a_{\mathrm{prod}}(n)
 =Q^{4n-4}\operatorname{Acc}_\star(n)+\Junk(n).
\]
This proves substitution.

Since \(Q=2R_0+1\),
\(0<R_0/Q<1/2\).
For \(n\geq5\),
\[
 \frac{\Junk(n)}{Q^{4n-4}}
 \leq Q^{-(4n-4)}
 +15Q^2\left(\frac{R_0}{Q}\right)^n
 <Q^{-(4n-4)}+15Q^2\,2^{-n}.
\]
When \(n\geq n_{\mathrm{thr}}\), both terms are below \(1/2\), proving the strict threshold estimate.

The lower bound \(E_{\mathrm{side}}(x)\geq2^{74|x|+74}\) tends to infinity with \(|x|\), so only finitely many strings lie below the fixed threshold.  Their values \(C_U(x)\) define a finite hard-coded table
\(H_{\mathrm{exc}}\).  A lookup in a fixed table is polynomial-time.  This existence claim does not require the threshold to be below \(2^{74}\).

For a nonexceptional input, put \(n=E_{\mathrm{side}}(x)\) and
\(B=Q^{4n-4}\).  The substitution and threshold estimates give
\[
 a_{\mathrm{prod}}(n)=B(1+C_U(x))+\Junk(n),
 \qquad0\leq\Junk(n)<B.
\]
Therefore
\[
 \left\lfloor\frac{a_{\mathrm{prod}}(n)}B\right\rfloor-1=C_U(x).
\]
The exceptional branch is correct by the table.

It remains to justify the running time of the postprocessor.  If \(y=0\), the second branch of \(h_\star\) returns zero immediately.  Assume henceforth that \(y\geq1\), let
\(A=4n-4\), and let \(\ell_y=|\binenc(y)|\).
Fix \(s_Q\) with \(2^{s_Q}\leq Q<2^{s_Q+1}\).  If
\(As_Q\geq\ell_y+1\), then \(Q^A>y\), and the quotient is zero.  Otherwise \(A\leq\ell_y\), and exponentiation by squaring computes \(Q^A\) using integers of at most \(A(s_Q+1)\leq2\ell_y\) bits.  Division is polynomial in \(|x|+\ell_y\).  No loop iterates \(n\) in unary.

Thus \((r_{\mathrm{met}},h_\star)\) is a polynomial-time metric reduction.  The function \(C_U\) is parsimoniously \(\#\EXP\)-complete by
Theorem~\ref{thm:source-machine}; composing its parsimonious reductions with this metric reduction proves hardness of \(G_{\mathrm{prod}}\).

For any sufficiently long \(x\), the nonexceptional postprocessor sends \(y=1\) to zero because \(B\geq2\).  Hence positivity of its second argument is not preserved.
\end{proof}

\section{Completeness of the Square-Tiling Count}\label{sec:completion}

The main theorem follows by composing the Wang-level readout with the geometric counting identity. The geometric postprocessor outputs an integer no larger than its input, so the two reductions compose in polynomial bit time. The general membership bound supplies the matching upper bound.

\begin{proof}[Proof of Theorem~\ref{thm:main}]
Every member of \(\Tstar\) is a nonempty finite 4-connected set by Theorem~\ref{thm:geometry}.  Proposition~\ref{prop:fixed-family-sharpE} applies to this fixed family and gives the stronger membership \(G_{\Tstar}\in\#\mathrm{E}\subseteq\#\EXP\).  On the canonical input \(\binenc(0)\), this count includes the unique empty tiling of \(\Square_0\), while malformed strings receive value zero by definition.

For \(n\geq1\), the sets \(\Square_n\) and \(\Rect_{n,n}\) coincide, as do their open Wang matching rules; hence \(a_{\mathrm{prod}}(n)=a_{\mathrm{rect}}(n,n)\).
Theorem~\ref{thm:geometry} gives
\[
 \left\lfloor
 \frac{g_{\Tstar}(P_sn)}
      {s_{\mathrm{amp}}^{2n(n-1)}}\right\rfloor
 =a_{\mathrm{prod}}(n)
\]
for every \(n\geq1\), including \(n=1\) under \(s_{\mathrm{amp}}^0=1\).

Define the instance map by
\[
 r_{\mathrm{geom}}(z)=
 \begin{cases}
  \binenc(P_sn),&z=\binenc(n),\ n\geq1,\\
  \binenc(0),&\text{otherwise}.
 \end{cases}
\]
It checks canonical binary syntax and, on a valid positive input, multiplies by the fixed constant \(P_s\), so it is polynomial-time.  Define \(h_{\mathrm{geom}}(z,y)=0\) when \(z\) is malformed or equals \(\binenc(0)\).  For \(z=\binenc(n)\) with \(n\geq1\), let \(A=2n(n-1)\) and choose the fixed integer \(s_s\) satisfying \(2^{s_s}\leq s_{\mathrm{amp}}<2^{s_s+1}\). If \(A=0\), set \(h_{\mathrm{geom}}(z,y)=y\).  If \(A\geq1\) and \(y=0\), set \(h_{\mathrm{geom}}(z,y)=0\).  For \(A,y\geq1\), let
\(\ell_y=|\binenc(y)|\).  If \(As_s\geq\ell_y+1\), then
\(s_{\mathrm{amp}}^A>y\), so set \(h_{\mathrm{geom}}(z,y)=0\).  Otherwise \(A\leq\ell_y\), and exponentiation by squaring computes the divisor, giving \(h_{\mathrm{geom}}(z,y)=\lfloor y/s_{\mathrm{amp}}^A\rfloor\). All intermediate integers have at most \(A(s_s+1)\leq2\ell_y\) bits.  Computing the divisor and quotient is therefore polynomial in
\(|\binenc(n)|+\ell_y\), without iterating \(n\) in unary.  This proves that
\(h_{\mathrm{geom}}\) is a total polynomial-time map.

For \(z=\binenc(n)\) with \(n\geq1\), the geometric floor identity established above gives
\[
 h_{\mathrm{geom}}\bigl(z,G_{\Tstar}(r_{\mathrm{geom}}(z))\bigr)
 =a_{\mathrm{prod}}(n)=G_{\mathrm{prod}}(z).
\]
On every other \(z\), both sides are zero.  Hence
\((r_{\mathrm{geom}},h_{\mathrm{geom}})\) is a one-query metric reduction.

The postprocessor satisfies \(h_{\mathrm{geom}}(z,y)\leq y\).  To make the composition explicit, combine the reduction \((r_{\mathrm{met}},h_\star)\) of Theorem~\ref{thm:wang-completeness} with the geometric reduction by setting
\[
 r(x)=r_{\mathrm{geom}}(r_{\mathrm{met}}(x)),\qquad
 h(x,y)=h_\star\bigl(x,h_{\mathrm{geom}}(r_{\mathrm{met}}(x),y)\bigr).
\]
Because the intermediate value \(h_{\mathrm{geom}}(r_{\mathrm{met}}(x),y)\) has no more bits than \(y\), the composed postprocessor runs in time polynomial in \(|x|+|\binenc(y)|\).  Hence \(C_U\) metric-reduces to \(G_{\Tstar}\).  Since \(C_U\) is parsimoniously \(\#\EXP\)-complete, \(G_{\Tstar}\) is \(\#\EXP\)-hard under polynomial-time metric reductions.  Together with membership, it is complete.

This completes the proof.
\end{proof}

\section{Conclusion and Discussion}

We prove that counting translation-only tilings of an $n\times n$ square by a fixed finite family of connected polyominoes is $\#\mathrm{EXP}$-complete under polynomial-time metric reductions. Together with the efficient tileability criterion for fixed polyomino families \cite{Pak18}, this establishes a sharp separation between feasibility and exact counting when the entire input is a binary side length. The proof shows that a binary side length can select an accepting-path count within a fixed Wang system and that this numerical information survives an ordinary geometric realization as an exact count, extending classical computation-to-tiling encodings to the fixed-square setting. The decisive step is a global counting guarantee obtained by combining a perimeter bound for open Wang rectangles with a seam bound for nontrivial rectangular dissections. Their combination places the total normalized contribution of all noncanonical configurations below one, so integer division recovers the intended count exactly. A companion Lean~4 formalization of the completeness result is available \cite{LeanRepo}.

Future directions include simplifying the fixed polyomino family and extending numerical isolation to other succinct counting problems. More broadly, combining geometric rigidity with global counting bounds may provide a reusable approach to proving counting hardness under fixed local constraints.

\section*{AI Methodology}
OpenAI's ChatGPT was used to improve the language and clarity of the manuscript. A GPT-based multi-agent system assisted in reviewing the mathematical proofs and organizing the arguments into a Lean 4 formalization, particularly for the constructions and correctness proofs in Sections 3--9.

\end{document}